%
%
%
%

\documentclass{acm_proc_article-sp}
\newcommand{\remove}[1]{}
\usepackage{multicol}
\usepackage{tabulary}
\usepackage[
    font={small},
            format=hang,    
                format=plain,
                    margin=0pt,
                        width=1.05\textwidth,labelfont=bf,tableposition=top
                        ]{caption}
\usepackage[list=true,font={small}]{subcaption}
\captionsetup{compatibility=false}

\usepackage{algorithm}
\usepackage{algpseudocode}
\makeatletter
\renewcommand{\ALG@beginalgorithmic}{\small}
\makeatother

\algnotext{EndFor}
\algnotext{EndIf}
\algrenewcomment[1]{\(\triangleright\) #1}
\algnewcommand{\LineComment}[1]{\State \(\triangleright\) #1}


\usepackage{graphicx}
\usepackage[T1]{fontenc}
\usepackage{scalefnt}
\usepackage{float}
\usepackage{amssymb}

\usepackage{amsthm}
\usepackage{epstopdf}
\usepackage{tikz}
\usetikzlibrary{calc}
\usetikzlibrary{calc,trees,positioning,arrows,chains,shapes.geometric,fit,
    shapes,shadows,matrix}
\usetikzlibrary{shapes.arrows}
\usetikzlibrary{shapes.geometric,positioning}
\usepackage{enumerate}
\usepackage{tikz}

\DeclareGraphicsRule{.tif}{png}{.png}{`convert #1 `dirname #1`/`basename #1 .tif`.png}
\newcommand{\h}{\hspace*{0.2in}}

\newtheorem{thm}{Theorem}
\newtheorem{lemma}{Lemma}
\newtheorem{definition}{Definition}



\begin{document}

\title{Necessary and Sufficient Conditions on Partial Orders for Modeling
Concurrent Computations}
%
%
%
%
%

\numberofauthors{2} 
%
\author{
%
%
\alignauthor Himanshu Chauhan\\
       \affaddr{The University of Texas at Austin}\\
       \email{himanshu@utexas.edu}
\and
\alignauthor Vijay K. Garg\\
       \affaddr{The University of Texas at Austin}\\
       \email{garg@ece.utexas.edu}
}

\newcommand{\dmp}{L_{DM} ( P )}
\newcommand{\lcgs}{L_{CGS}}
\newcommand{\lma}{L_{MA}}
\newcommand{\ldm}{L_{DM}}
\newcommand{\ra}{\rightarrow}
\newcommand{\figref}[1]{Fig. \ref{#1}}
\newcommand{\lt}{\leadsto}
\newcommand{\df}{\stackrel{\rm def}{=}}
\newcommand{\inx}[1]{#1\index{#1}}
\newcommand{\Ra}{\Rightarrow}
\newcommand{\pe}{\prec}
\newcommand{\peim}{\prec_{im}}
\newcommand{\mya}{\;\wedge\;}
\newcommand{\mlt}{<}
\newcommand{\sposet}{$(S,<,\tau)$}
\newcommand{\we}{width-extensible}
\newcommand{\scrf}{interleaving-consistent}
\newcommand{\xw}{$w$~}

\maketitle
\begin{abstract}
Partial orders are used extensively for modeling and analyzing concurrent computations. 
In this paper, 
 we define two properties of partially ordered sets:  
{\em width-extensibility} and {\em interleaving-consistency}, and show that 
a partial order can be a valid state based model:
(1) of some synchronous concurrent computation iff it is
 width-extensible, and (2) of some asynchronous
 concurrent computation iff it is width-extensible and interleaving-consistent.
We also show a duality between the event based and state based models of concurrent computations, and give
algorithms to convert models  between the 
two domains.
When applied to the problem of checkpointing, our theory leads to a better understanding 
of some existing results and algorithms in the field. 
It also leads to efficient detection algorithms for predicates whose evaluation requires knowledge of states
from 
all the processes in the system. 
\end{abstract}



\section{Introduction}
The `happened-before' relation introduced by Lamport
\cite{Lamp:HappenBefore} is a prevalent technique 
for modeling executions of distributed as well 
shared memory concurrent programs. 
The relation models causality and 
imposes a partial order on the set 
of events that occur in a computation. 
For a large number of applications,  models
based on events 
of the computation provide adequate basis for analysis. 
But for many applications such as global predicate detection \cite{GargWald:WeakUnstable} and checkpointing \cite{xu:zig}, it is beneficial to model a distributed computation
as a partial order on states of the 
involved processes.  
Events and states, however, are fundamentally different concepts. Events are instantaneous
and states have duration. A state captures values of all the variables
(including program counter) at a process, whereas an event captures the transition
of the system from one state to the other\footnote{Alternatively, one may model states
as instantaneous and events with duration. The point is that either the state or the event
must be modeled
with duration.}.
Although, there are multiple papers \cite{GargWald:WeakUnstable,God:1996:SV,flanagan2005dynamic} that model computations as partially ordered sets
(posets), there is no clear theory that brings out the distinction between
posets used for modeling event based executions and those 
used for modeling state based executions. This paper's first contribution is in 
establishing such a theory. 
For example, consider the posets in Fig.~\ref{fig:noncomp}. 
Are they valid
event based (or state based) models for some computation? 
What is the class of posets that characterize event based and state based models --- specifically, can every poset be a model for some
computation or there exist some restrictions on posets that model
the computations in event based or state based models? 
\remove{ }
Additionally, any model of a concurrent computation must
define the notion of a consistent
global state. Are the definitions different in state based and event based models?
One of the main goals of this paper is to establish results that form a basis to answer all these questions 
in a definitive manner. 
We study the relationship between the event based
models and state based models,
and characterize the exact class of posets that can be used to model
computations in either framework. We show a duality between the two models
that allows easy translation of algorithms from one model to the other.
\remove{
We model a computation in the event domain as a labeled poset
in which every event is labeled with the subset of (sequential) processes
on which that event is shared. The labeling of events must satisfy the
condition that all events that have a label for process $P_i$ must 
be totally ordered. The notion of shared events is similar to 
models such as CSP \cite{HoareCSP} and CCS \cite{Milner}.
In these computations, two or more processes may execute a shared event
resulting in state transition for them. For example, a distributed computation 
with blocking sends (in which the sender waits for the receiver to be ready to receive the
message) can be modeled using shared events.
We give an equivalent
state based model and show that a poset is a valid model for states of a 
concurrent computation {\em iff} it is width-extensible (defined in section \ref{sec:simple}). The poset in Fig.~\ref{fig:noncomp} is not
width-extensible and hence it is not a valid state based model.}
\remove{

Distributed computations in which communication  
between processes is achieved by asynchronous message 
passing are a special case of the general model of
concurrent computations. Using the 
event set notion, we can say that 
an {\em asynchronous distributed computation} is one in which the event set of different 
processes are disjoint.
Such a computation
is usually modeled by the Lamport's happened-before \cite{Lamp:HappenBefore} poset of events. 
In this paper, we define a class of posets called
width-extensible and interleaving-consistent and show that a poset is a valid model for states of an asynchronous
distributed computation {\em iff} it is width-extensible and interleaving-consistent. 

}
\remove{
The characterization of state based models for simple and complex computations is 
shown in Table \ref{tab:model}.

\begin{table}[htbp]
  \centering
  \begin{tabular}{0.95\textwidth}{| c | ll |}
    \hline
    {\bf Notion} & Event based Model & State based Model\\
    \hline
    Simple Computations &  Posets with no Shared Events & Width-extensible and Interleaving Consistent Posets\\
    Complex Computations &  Posets with Shared Events & Width-extensible Posets\\
    Consistent Global States &  Down-Sets (Order Ideals) & Width-Antichains\\
    \hline
  \end{tabular}
  \vspace{0.1in}
  \caption{State Event Duality
 }
   \label{tab:model}
\end{table}

}
In short, the key contributions of this paper are the following:
\begin{itemize}
\item we define two properties on posets: {\em width-extensibility}, 
and
{\em interleaving-consistency}, and 
show that they are necessary and sufficient 
conditions for posets modeling states of concurrent computations. 
\remove{
\item we establish the correspondence between event based and state based models 
of concurrent computations. 
\item we provide complete characterization, by establishing necessary and sufficient conditions, of partial orders that can model concurrent 
computations under the state based model.  
We prove that the class of width-extensible posets
forms a complete characterization of the state based models of concurrent computations,
and the class of width-extensible and interleaving-consistent posets
forms a complete characterization of the state based models of asynchronous distributed computations.
}
\item 
we give algorithms to translate event based models to state based models and vice-versa.
We establish the correspondence between the notions of the consistent global states in
these two models.
\item we show applications of our theory to the areas
of checkpointing and predicate detection (in Section
~\ref{sec:checkpoint}).   
\end{itemize}
\remove{
Some of the ideas and results of this paper may seem
intuitively 
straightforward, perhaps even obvious, but to the best
of our knoweldge they have not been presented in 
any published work. }

The rest of this paper is organized as follows.
Section~\ref{sec:back} covers the background concepts 
about modeling the concurrent computations as 
posets, and well-established concepts of event based models
of computations. 
Section~\ref{sec:simple} defines the state based models, 
and shows how to generate them from event based models. Sections~\ref{sec:synch} and~\ref{sec:adc} give complete characterization of state based models for synchronous and asynchronous 
concurrent computations. 
We conclude in 
Section~\ref{sec:checkpoint} by discussing 
the applications of our theory to the fields of checkpointing 
and predicate detection. 

\section{Background \& Terminology}
\label{sec:back}

We use the term {\em program} to represent a 
finite set of instructions, and {\em computation} to represent an execution of a program. In this paper, we 
restrict our focus to finite computations --- computations 
that terminate within bounded time. An event 
(of a computation) is a term that denotes --- 
depending on the context of the problem --- 
the execution 
of a single instruction or a collection of instructions
together.   
A concurrent computation  
is a computation  involving more than one
processes/threads --- it is possible that the 
instructions executed by different processes/threads are
different. Hence, a distributed computation is a concurrent 
 computation without shared memory
processes in which inter-process communication is possible 
only through message-passing. 
For modeling concurrent computations, the happened-before
relation ($\ra$) is defined as follows. The relation
$\ra$ on the set of events of a computation is the 
smallest relation that satisfies the following three 
conditions: $(1)$
If $a$ and $b$ are events in the same process 
and $a$ occurs before $b$, then $a \ra b$.  
$(2)$ For a distributed system, if $a$ is the sending
of a message and $b$ is the receipt of the same message, 
then $a \ra b$. For a shared memory system, if $a$ 
is the release of a lock by some thread and $b$ is the subsequent 
acquisition of that lock by any thread then $a \ra b$. $(3)$ If $a \ra b$ and $b \ra c$ then 
$a \ra c$.

Formally, a finite partially ordered set ({\em poset} in short) 
 is
a pair $P = (E, \ra)$ where $E$ is a finite set and $\ra$ is 
an irreflexive, antisymmetric, and
transitive binary relation on $E$ \cite{davey}.
We obtain a poset when we apply the happened-before ($\ra$) on the set of events of a
finite computation. Let $E$ be the 
set of events. Consider two events $a, b \in E$. 
If either $a \ra b$ or $b \ra a$, we say that $a$ and $b$ are {\em comparable}; otherwise,
we say $a$ and $b$ are {\em incomparable} or {\em concurrent} (in the 
context of concurrent computations), and denote this relation by $a~||~b$. Observe that $a~||~b~\wedge~b~||~c \not\Ra a~||~c$.   

It is important to note that 
multiple computations could have the identical posets as their model. 

\subsection{Concepts on Posets}

Let $P=(E,\ra)$ be a finite poset as defined above. 
A subset $Y\subseteq E$  is called an {\em chain (antichain)}, if every pair of distinct 
 points from $Y$ is comparable (incomparable) in $P$.
 The {\em height} of a poset
is defined to be the size of a largest chain in the poset.
 The {\em width} of a poset
is defined to be the size of a largest antichain in the poset.
All antichains of size equal to the width of the poset are called {\em width-antichains}
in this paper.
Let $\mathcal{A}(P)$ denote the set of all width-antichains
of $P$. Order $\leq$ is defined over $\mathcal{A}(P)$
as: \\
$A \leq B~~(A,B \in \mathcal{A}(P))$ iff $\forall a \in A, 
\exists b \in B: a \leq b$ in $P$. 

We model processes/threads 
as chains of posets, and thus 
events/states of every process/thread form a 
totally ordered chain.  
A family $\pi=(C_i~|~i=1,2\ldots,n)$ of chains of $P$ 
is called a chain partition of $P$ if $\bigcup{C_i~
|~i=1,2\ldots,n)}=P$. 

Given a subset $Y \subseteq E$, the {\em meet} of $Y$, if it exists, is the greatest lower bound
of $Y$ and the {\em join} of $Y$ is the least upper bound.
A poset $P = (X, \leq)$ is a {\em lattice} if joins and meets exist for all
finite subsets of $X$.
Let $P$ be a poset with a given chain partition of width $w$.
In a concurrent computation, $P$ is the set of events
executed under the happened-before partial order. Each chain would correspond to a
total order of events executed on a single process.
In such a poset, every element $e$ can be identified with a tuple
$(i,k)$ which represents the $k$th event in the $i$th process; $1 \leq i \leq w$.

A subset $Q$ is a {\em downset} (also called {\em order ideal}), of $P$ if it satisfies the
constraint that if $f$ is in $Q$ and $e$ is less than or equal to $f$, then
$e$ is also in $Q$. 
When a computation is modeled
as a poset of events, the downsets are called {\em consistent cuts}, or
{\em consistent global states} \cite{ChanLamp:Snap}. Throughout 
this paper, we use the term consistent cut. 
The set of downsets is closed under both union and intersection and
therefore forms a lattice under the set containment order \cite{davey}.

\subsection{Event based Model of Concurrent Computations}

As discussed earlier, a concurrent computation is usually modeled as a set of
events, $E$, together with a partial order {\em happened-before} \cite{Lamp:HappenBefore}, denoted by $\ra$.
Implicit in this model is the partition of $E$
into chains corresponding to the processes 
on which the events are executed. This partition is called a 
{\em chain partition}. 
We make this partition explicit in our model because the translation
of the event based model into the state based model depends upon
it. 

\begin{definition}[Event based model of computation]
\label{def:event_model}
A concurrent computation on $n$ processes is modeled by 
$\hat{E}=(E, \ra, \pi)$, where
$E$ is the set of events, $\ra$ is the happened-before
relation on $E$,
and $\pi$ maps every event to a subset of processes from $\{1..n\}$ such that
for all $i \in \{1..n\}: E_i = \{e \in E~|~i \in \pi(e) \}$ is totally ordered under $\ra$.
\end{definition}

Here, $\pi$ is a chain partition of poset 
defined by $(E,\ra)$. 
Intuitively, in the context of concurrent 
computations, $\pi$ maps events executed on a single process to a 
total order such that $E_i$ is the totally 
ordered set of events executed on process/thread $C_i$. Note that an event, such as execution of a barrier, could be assigned to multiple processes. If an event $e \in E_i \cap E_j$, then $e$ is a `shared' event for processes $C_i$ and $C_j$.
\tikzstyle{place}=[circle,draw=black,thick,inner sep=0pt,minimum size=2mm]
\tikzstyle{nsplace}=[circle,draw=black,fill=black,thick,inner sep=0pt,minimum size=2mm]
\tikzstyle{place1}=[circle,draw=black,thick,inner sep=0pt,minimum size=3mm]
\tikzstyle{inner}=[circle,draw=black,fill=black,thick,inner sep=0pt,minimum size=1.25mm]
\tikzstyle{satblock} = [rectangle, draw=gray, thin, fill=black!20,
text width=3.5em, text centered, rounded corners, minimum height=2em]
\tikzstyle{nsatblock} = [rectangle, draw=gray, thin,
text width=3.5em, text centered, rounded corners, minimum height=2em]
\tikzstyle{psatblock} = [rectangle, draw=gray, thin,
text width=3.85em, text centered, rounded corners, minimum height=2em]
\begin{figure*}[!htb]
\centering
\begin{subfigure}[b]{0.3\textwidth}
\begin{tabular}{l|l}
Proc.~$1$ & Proc.~$2$ \\
\hline
$1$: local event ($a$) & $1$: local event ($e$)\\
$2$: send msg ($b$) & $2$: receive msg ($f$)\\
$3$: local event ($c$) & $3$: local event ($g$)\\
\end{tabular}
\caption{Pseudocode of instructions}
\label{fig:simple_instr}
\end{subfigure}
\qquad
\begin{subfigure}[b]{0.3\textwidth}
\centering
\begin{tikzpicture}[]
\node at ( 0,1) [nsplace] [label=below:$a$] (a) {};
\node at ( 1,1) [nsplace] [label=below:$b$] (b) {};
\node at ( 2.5,1) [nsplace] [label=below:$c$] (c) {};
\node at ( 0,0) [nsplace] [label=below:$e$] (e) {};
\node at ( 1.5,0) [nsplace] [label=below:$f$](f) {};
\node at ( 3,0) [nsplace] [label=below:$g$] (g) {};
\draw [thick] (b.west) -- (a.east);
\draw [thick] (c.west) -- (b.east);
\draw [thick] (f.west) -- (e.east);
\draw [thick] (g.west) -- (f.east);

\draw [thick,->] (a.east) -- (b.west);
\draw [thick,->] (b.east) -- (c.west);
\draw [thick,->] (e.east) -- (f.west);
\draw [thick,->] (f.east) -- (g.west);
\draw [thick,->] (b) -- (f);
\end{tikzpicture}
\caption{Event Based Model}
\label{fig:run-event}
\end{subfigure}
\qquad
\begin{subfigure}[b]{0.3\textwidth}
\begin{tikzpicture}[]
\node at ( -1,1) [place] [label=below:$a_0$] (a0) {};
\node at (-0.5,1) [] [label={above:\textcolor{blue}{$a$}}] () {};
\node at ( 0,1) [place] [label=below:$a'$] (a) {};
\node at (0.5,1) [] [label={above:\textcolor{blue}{$b$}}] () {};
\node at ( 1,1) [place] [label=below:$b'$] (b) {};
\node at (1.75,1) [] [label={above:\textcolor{blue}{$c$}}] () {};
\node at ( 2.5,1) [place] [label=below:$c'$] (c) {};
\node at ( -1,0) [place] [label=below:$e_0$] (e0) {};
\node at (-.5,-.1) [] [label={above:\textcolor{blue}{$e$}}] () {};
\node at ( 0,0) [place] [label=below:$e'$] (e) {};
\node at (.75,-.1) [] [label={above:\textcolor{blue}{$f$}}] () {};
\node at ( 1.5,0) [place] [label=below:$f'$](f) {};
\node at (2.25,-.1) [] [label={above:\textcolor{blue}{$g$}}] () {};
\node at ( 3,0) [place] [label=below:$g'$] (g) {};
\draw [thick,->] (a0.east) -- (a.west);
\draw [thick,->] (a.east) -- (b.west);
\draw [thick,->] (b.east) -- (c.west);
\draw [thick,->] (e0.east) -- (e.west);
\draw [thick,->] (e.east) -- (f.west);
\draw [thick,->] (f.east) -- (g.west);
\draw [thick,->] (a) -- (f);
\end{tikzpicture}
\caption{State Based Model}
\label{fig:run-state}
\end{subfigure}
\caption{An example distributed computation and its patial order models}
\label{fig:simple}
\end{figure*}
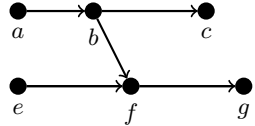
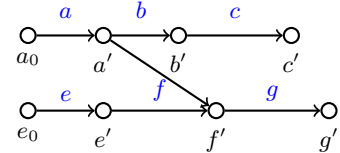
\tikzstyle{place}=[circle,draw=black,fill=white,thick,inner sep=0pt,minimum size=2mm]
\tikzstyle{blackCirc}=[circle,draw=black,fill=black,thick,inner sep=0pt,minimum size=2mm]
\tikzstyle{nsplace}=[]
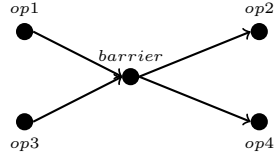
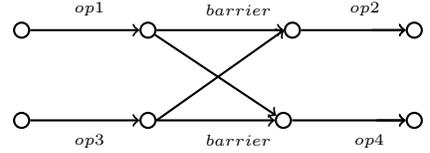
\begin{figure*}[htbp]
\begin{subfigure}[b]{0.3\textwidth}
\begin{tabular}{l|l}
Proc.~$1$ & Proc.~$2$ \\
\hline
$1$: local event ($op1$) & $1$: local event ($op3$)\\
$2$:  execute $barrier$ & $2$:  execute $barrier$\\
$3$: local event ($op2$) & $3$: local event ($op4$)\\
\end{tabular}
\caption{Pseudocode of instructions}
\label{fig:barrier_instr}
\end{subfigure}
\qquad
\begin{subfigure}[b]{0.3\textwidth}
\centering
{\scalefont{0.7}
\begin{tikzpicture}[scale=0.6]
\node at (0,2) [blackCirc] [label={above:$op1$}] (a) {};
\node at (2.35,1) [blackCirc] [label={above:$barrier$}] (b) {};
\node at (5.2,2) [blackCirc] [label={above:$op2$}] (c) {};
\node at (0,0) [blackCirc] [label={below:$op3$}] (e) {};
\node at (5.2,0) [blackCirc] [label={below:$op4$}] (g) {};
\draw [thick,->] (a.east) -- (b.west);
\draw [thick,->] (b.east) -- (c.west);
\draw [thick,->] (e.east) -- (b.west);
\draw [thick,->] (b.east) -- (g.west);
\end{tikzpicture}
}
\caption{Event based model}
\label{fig:barrier-run-event}
\end{subfigure}
\begin{subfigure}[b]{0.3\textwidth}
{\scalefont{0.7}
\begin{tikzpicture}[scale=0.6]
\node at (0,2) [nsplace] [label={above:{$op1$}}] (a) {};
\node at (1.3,2) [place] [] (a1) {};
\node at (1.3,0) [place] [] (onefour) {};
\node at (3.3,2) [nsplace] [label={above:{$barrier$}}] (b) {};
\node at (6.1,2) [nsplace] [label={above:{$op2$}}] (c) {};
\node at (4.5,2) [place] [] (c1) {};
\node at (0,0) [nsplace] [label={below:{$op3$}}] (e) {};
\node at (3.3,0) [nsplace] [label={below:{$barrier$}}](f) {};
\node at (4.3,0) [place] [](f1) {};
\node at (6.2,0) [nsplace] [label={below:{$op4$}}] (g) {};
\node at (7.2,2) [place] [] (end1) {};
\node at (7.2,0) [place] [] (end2) {};
\draw [thick,->] (c.east) -- (end1.west);
\draw [thick,->] (g.east) -- (end2.west);
\node at (-1.5,2) [place] [] (sonezero) {};
\node at (-1.5,0) [place] [] (stwozero) {};
\node at (-2,2) [] (bot11) {};
\node at (-2,0) [] (bot22) {};
\draw [thick,->] (sonezero.east) -- (a1.west);
\draw [thick,->] (onefour.east) -- (c1.west);
\draw [thick,->] (a1.east) -- (c1.west);
\draw [thick,->] (c1.east) -- (end1.west);
\draw [thick,->] (stwozero.east) -- (onefour.west);
\draw [thick,->] (onefour.east) -- (f1.west);
\draw [thick,->] (f1.east) -- (end2.west);
\draw [thick,->] (a1) -- (f1);
\end{tikzpicture}
}
\caption{State based model}
\label{fig:barrier-run-state}
\end{subfigure}
\caption{A computation with a barrier and its partial order models}
\label{fig:barrier}
\end{figure*} 
Fig.~\ref{fig:simple}\subref{fig:run-event} shows the
event based model of a distributed computation resulting from the
 execution of the pseudocode instructions listed in 
Fig.~\ref{fig:simple}\subref{fig:simple_instr}. 
Fig.~\ref{fig:barrier}\subref{fig:barrier-run-event} shows the 
event based model of a concurrent computation on two processes 
that synchronize using a barrier (as per the instructions listed
in Fig.~\ref{fig:barrier}\subref{fig:barrier_instr}). 
Note that the model of Fig.~\ref{fig:barrier}\subref{fig:barrier-run-event} allows us to represent synchronous messages where the sender
blocks for the receiver to be ready. Such synchronous messages are represented by a
single event $e$ such that $\pi(e)$ includes the sender as well as the receiver.
The model also allows us to represent barriers which require multiple processes
to wait until all the processes participating in the barrier execute it.
It can also model behavior of finite communicating sequential processes \cite{HoareCSP}.

{\bf Note}: In all the figures throughout this paper, events are depicted with dark 
filled circles, and states are depicted with empty circles.

Generally, the analysis of concurrent computations 
requires reasoning over the valid states of the system that could 
occur in these computations. These states are commonly
called {\em consistent global states} or {\em consistent cuts}. 
\begin{definition}[Consistent cut in event based model]
\label{def:cgs_event}
Given an event based model $(E, \ra, \pi)$ of 
a computation, $G \subseteq E$ is a {\em consistent cut} of the
computation if
$~\forall e,f \in E:   (f \in G) \wedge (e \ra f) \Rightarrow (e \in G)$. 
\end{definition}
Note that this definition is independent of $\pi$ and coincides
with the definition of a {\em down-set} of a poset 
\cite{davey}.
It is well known that the set of downsets forms a distributive 
lattice.
Conversely, Birkhoff showed that every finite distributive lattice
can be generated as the set of downsets of a poset \cite{Birk4}.
Thus, finite distributive lattices completely characterize the set of consistent cuts
in the event based model.

The consistent cuts of the 
event based model 
in
Fig.~\ref{fig:run-event} are: 
$\{\},\{a\},\{e\},\{a,b\},\{a,e\},\{a,b,c\},\{a,b,e\},\{a,b,c,e\},$\\
$\{a,b,e,f\},\{a,b,c,e,f\},\{a,b,e,f,g\},\{a,b,c,e,f,g\}$. 

\section{Modeling Computations using 
\\States}
\label{sec:simple}

For many applications in concurrent debugging 
\cite{tzoref2007}, and 
predicate detection in distributed systems it is more natural to
model a computation using states rather than events.
For example, we may be interested in the cut (global state) in which
all processes have taken their local checkpoint. We first 
give an intuition for 
state based model of concurrent computations. 
%
%
An event is always 
executed in some state, and the state before the 
event's execution `existed-before' the state 
resulting from the execution. The existed-before relation
between states is denoted using ``$<$''. 
The diagram (denoting the happened-before relation) of the model based on events in Fig.~\ref{fig:simple}\subref{fig:run-event} 
corresponds to the state based model shown in
Fig.~\ref{fig:simple}\subref{fig:run-state}.
In this figure, the execution of event $a$ gets translated into 
an edge between two states: initial state $a_0$ (that
existed before $a$ was executed), 
 and state $a'$ (the state immediately after $a$'s execution). Thus, 
we have $a_0 < a'$ in the state based model.  


 Although some concepts 
carry over from events
to states, there are some important differences. For example, any poset of events in which all events on a single process are totally ordered can be a model of some 
concurrent computation in the happened-before model. But, not every
poset of states is a valid concurrent computation. Consider the poset  in 
\figref{fig:invalidB}. If this poset were to 
be used as a state based model of a computation, 
the model would be incorrect ---   
because even if the modeled states form a poset, the
equivalent event based model  
would have a cycle (as shown in Fig.~\ref{fig:eventcycle})\footnote{The techniques
involved in generating event based model 
from state based model are in the next section.}.
 Thus, we can allow only those partial orders on states that
do not induce cycles on the order on events.

We claim that a poset can only be a valid state based model of a concurrent computation if it satisfies a notion
called 
{\em width-extensibility}. 

\begin{definition}[Width-extensible Poset]
A poset $(X, <)$ is width-extensible if and only if for every
antichain $A \subseteq X$, there exists a width-antichain
$W$ containing $A$.
\end{definition}

\tikzstyle{place}=[circle,draw=black,fill=white,thick,inner sep=0pt,minimum size=2mm]
\tikzstyle{nsplace}=[]
\begin{figure}[!htb]
\begin{subfigure}[b]{0.4\textwidth}
\centering
\begin{tikzpicture}[scale=0.7]
\node at ( 0,2) [place] [label={above:$a$}] (a) {};
\node at (3,2) [place] [label={above:$b$}] (b) {};
\node at (6,2) [place] [label={above:$c$}](c) {};
\node at (0,0) [place] [label={above:$d$}] (e) {};
\node at (4,0) [place] [label={above:$e$}](f) {};
\draw [thick,->] (a.east) -- (b.west);

\draw [thick,->] (e.east) -- (f.west);
\draw [thick,->] (b.east) -- (c.west);
\draw [thick,->] (b) -- (f);
\draw [thick,->] (e) -- (b);
\end{tikzpicture}
\caption{Not width-extensible: no width-antichain for $\{b\}$}
\label{fig:invalidB}
\end{subfigure}
\begin{subfigure}[b]{0.4\textwidth}
\centering
\begin{tikzpicture}[scale=0.7]
\node at ( 0,1.4) [place] [label={above:$a$}] (a) {};
\node at (2,1.4) [place] [label={above:$b$}] (b) {};
\node at (4,1.4) [place] [label={above:$c$}] (c) {};
\node at (0,0) [place] [] [label={above:$d$}] (d) {};
\node at (2,0) [place] [] [label={above:$e$}] (e) {};
\node at (4,0) [place] [] [label={above:$f$}] (f) {};

\node at (0,-1.4) [place] [] [label={above:$g$}] (g) {};
\node at (2,-1.4) [place] [] [label={above:$h$}] (h) {};
\node at (4,-1.4) [place] [] [label={above:$i$}] (i) {};

\draw [thick,->] (a.east) -- (b.west);
\draw [thick,->] (b.east) -- (c.west);

\draw [thick,->] (d.east) -- (e.west);
\draw [thick,->] (e.east) -- (f.west);

\draw [thick,->] (g.east) -- (h.west);
\draw [thick,->] (h.east) -- (i.west);
\draw [thick,->] (b) -- (f);
\draw [thick,->] (e) -- (i);
\end{tikzpicture}
\caption{Not width-extensible: no width-antichain for $\{b,i\}$}
\label{fig:noantichain}
\end{subfigure}
\caption{Invalid posets under the state based model}
\label{fig:noncomp}
\end{figure}
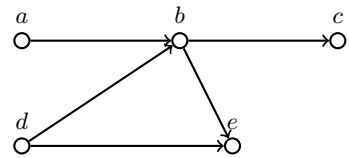
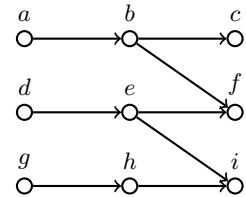
Informally, when states of a concurrent computation are modeled as a poset,
this property requires that for any set of incomparable local states there
is a possible consistent cut that includes these local states.
We will show later that in the state based model, the consistent cuts
correspond to width-antichains (and not down-sets).
The poset in Fig.~\ref{fig:invalidB} is not width-extensible  
because there is no width-antichain that contains $b$.

In the above definition of width-extensible posets, we can not substitute
``for all antichains'' by ``for all antichains of size $1$''.
In the example of Fig.~\ref{fig:noantichain}, there is a width-antichain for every individual element $a$ to $i$. 
This can be easily verified as 
$\{a,d,g\}$, $\{b,e,h\}$, and $\{c,f,i\}$ are all width-antichains. 
But there is no width-antichain that contains $\{b,i\}$.
Hence, the poset is not width-extensible. 

We now show a surprising 
result: it is sufficient to restrict our
attention to antichains of size two for checking 
width-extensibility. 
\begin{thm}
A poset $(X,<)$ is width-extensible if and 
only if for every antichain $A$ of size
at most two, there exists a width-antichain $W$ containing $A$.
\end{thm}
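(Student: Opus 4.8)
The plan is to prove the nontrivial implication; the forward direction is immediate, since an antichain of size at most two is in particular an antichain, so width-extensibility already supplies the required width-antichain.

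For the converse, I would first fix, using Dilworth's theorem, a decomposition of $X$ into $w$ pairwise disjoint chains $C_1,\dots,C_w$, where $w$ is the width of the poset. Since a chain meets any antichain in at most one element, a width-antichain $W$ (i.e.\ an antichain of size $w$) meets each $C_i$ in exactly one element, which I write $p_i(W)\in C_i$; thus a width-antichain is determined by its coordinate vector $(p_1(W),\dots,p_w(W))$, and conversely any selection of one element from each $C_i$ that forms an antichain is automatically a width-antichain. The key structural fact I would then prove is a closure property: if $W$ and $W'$ are width-antichains, then so are $W\wedge W' := \{\,\min(p_i(W),p_i(W')) : 1\le i\le w\,\}$ and $W\vee W' := \{\,\max(p_i(W),p_i(W')) : 1\le i\le w\,\}$, where the minimum and maximum are taken inside the totally ordered chain $C_i$. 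These sets have exactly $w$ elements, one per chain; and if $W\wedge W'$ were not an antichain, two of its coordinates would be comparable, say $\min(p_i(W),p_i(W')) < \min(p_j(W),p_j(W'))$ with $i\ne j$, and then picking the side (say $W$) on which the left-hand minimum is attained yields $p_i(W) < \min(p_j(W),p_j(W')) \le p_j(W)$, contradicting that $W$ is an antichain; the argument for $W\vee W'$ is symmetric. By associativity the same holds for any finite family of width-antichains, and the coordinate-wise meet (resp.\ join) lies, coordinate-wise, below (resp.\ above) each of its arguments.

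With this in hand I would argue by induction on $k = |A|$ that every antichain $A = \{a_1,\dots,a_k\}$ lies in some width-antichain, the cases $k \le 2$ being exactly the hypothesis. For $k > 2$ the $a_m$ lie in distinct chains, say $a_m \in C_m$. By the inductive hypothesis there is a width-antichain $W$ with $p_m(W) = a_m$ for all $m < k$, and by the size-two hypothesis, for each $m < k$ there is a width-antichain $W_m$ with $p_m(W_m) = a_m$ and $p_k(W_m) = a_k$. Since $p_k(W)$ and $a_k$ both lie in $C_k$, they are comparable. If $p_k(W) \le a_k$, set $V := W \vee (W_1 \wedge \cdots \wedge W_{k-1})$; otherwise set $V := W \wedge (W_1 \vee \cdots \vee W_{k-1})$. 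In either case a direct coordinate check — using that each $W_m$ has $k$-th coordinate $a_k$, that it has $m$-th coordinate $a_m$, and the case assumption at coordinate $k$ — gives $p_m(V) = a_m$ for all $m \le k$, so $V$ is a width-antichain containing $A$, which closes the induction.

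I expect the main obstacle to be the closure lemma of the second paragraph, namely that the width-antichains form a lattice under coordinate-wise minimum and maximum relative to a fixed Dilworth decomposition. Once that is available the induction is essentially bookkeeping; the only genuine idea there is the case split on how $p_k(W)$ compares to $a_k$, which is exactly the point at which controlling all antichains of size two (rather than just size one) is what makes the argument go through.
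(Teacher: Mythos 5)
Your proof is correct, but it takes a genuinely different route from the paper's. The paper also fixes a chain partition $C_1,\dots,C_w$ into $w$ chains, but its key lemma is the one-dimensional Helly property: for a chain $C_i$ disjoint from the antichain $A$ and an element $a\in A$, the set $I_i(a)$ of elements of $C_i$ incomparable to $a$ is an interval of that chain; the size-two hypothesis makes each pairwise intersection $I_i(a)\cap I_i(b)$ non-empty, pairwise-intersecting intervals of a total order share a common point, and that common point is appended to $A$ --- one missing chain at a time. Your key lemma is instead the closure of the set of width-antichains under coordinatewise minimum and maximum relative to the Dilworth decomposition (i.e., that maximum antichains form a lattice), followed by an induction that assembles the desired width-antichain as a meet/join of antichains supplied by the hypothesis; your case split on how $p_k(W)$ compares to $a_k$, and your coordinate checks, are all sound. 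The two structural facts are of comparable weight and both are classical --- your closure lemma is essentially the Dilworth result on the lattice of maximum antichains that the paper itself cites in Section 3 when discussing consistent cuts. The paper's argument is more economical in that it only ever adjoins one new element at a time and never manipulates whole antichains; yours buys an explicit connection to the lattice of width-antichains, which is the object the rest of the paper is actually about, and it avoids the slightly garbled step in the paper's write-up where the pairwise intersections $I_i(a,b)\cap I_i(b,c)$ are asserted non-empty (what is really meant there is Helly applied to $I_i(a), I_i(b), I_i(c)$).
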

\begin{proof}
The necessity is obvious --- 
 the definition 
of width-extensibility demands that every antichain 
is contained in some width-antichain. Hence, 
for $(X,<)$ to be width-extensible, antichains 
of size at most two must also be contained in 
a width-antichain. 
We now prove sufficiency.
We want to prove that
if every antichain of size 
at most two is contained in a width-antichain, then every antichain (of any size)
is also contained in a width-antichain. Let \xw be the width of the poset $(X,<)$ and
$\{C_1,C_2,...,C_w\}$ be a chain partition of size $w$. Consider an antichain $A$ of size $k, 3 \leq k \leq w$. If \xw $= k$, then $A$ itself 
is a width-antichain, and we have the result.
Suppose \xw $> k$, and $A$ is not contained in any width-antichain. Hence, there is some chain  $C_i$  such that $A$ does not have any elements from $C_i$. We know that for any pair of elements $a, b \in A$, with $a \neq b$, the antichain $\{a, b\}$ is width-extensible. Let $I_i(a,b)$ denote the maximal interval on $C_i$ that contains 
all the elements that are incomparable to both $a$ and $b$. As $\{a, b\}$ is width-extensible, we know that $I_i(a,b)$ is non-empty.  
Now consider $a,b,c \in A$, where all three are distinct. The width-extensibility of size two antichains 
guarantees 
that $I_i(a,b)$, $I_i(b,c)$, and $I_i(a,c)$ are 
all non-empty. Since every pair of these intervals have non-empty intersection, and all intervals are
sets of one or more consecutive states in $C_i$, we get that
$I_i(a,b) \cap I_i(b,c) \cap I_i(a,c) \neq \phi$.
 This means that $\exists d \in C_i : (d~||~a) \wedge (d~||~b) \wedge (d~||~c)$, i.e. $d$ is concurrent to $a$, $b$, 
 and $c$. Hence, $d$ can be added to $A$.
By repeating this argument for all chains that do not have any element in $A$, we can extend $A$ to a width-antichain.
\end{proof}
We can now define the state based model of a concurrent computation as follows: 
\begin{definition}[State based model of concurrent computations]
\label{def:sbcc}
A concurrent computation on $n$ processes is modeled 
by $\hat{S}$: a tuple $(S, < , \tau)$, where
$S$ is the set of local states, $(S, <)$ is a width-extensible poset,
and $\tau$ is a map from $S$ to $\{1..n\}$ such that 
for all distinct states $s,t \in S$
for all $i \in \{1..n\},  S_i = \{s \in S~|~i \in \tau(s) \}$ is totally ordered under $<$. 
i.e.,~~ 
$ \tau(s) = \tau(t) \Ra (s < t) \vee (t < s)$. 
\end{definition}

Thus, $\tau$ partitions $S$ such that every block of the partition $S_i$ is totally
ordered. The relation $<$ between states captures the
 `existed-before' notion discussed in the first para
 of Section~\ref{sec:simple}. 
Fig.~\ref{fig:simple}\subref{fig:run-state} and  \ref{fig:barrier}\subref{fig:barrier-run-state}, are corresponding state
based models of event based models shown in 
Fig.~\ref{fig:simple}\subref{fig:run-event} and  \ref{fig:barrier}\subref{fig:barrier-run-event}. Note that 
in these figures (of state based models), the events 
 are shown as edge labels above the edges
 that capture $<$ (existed-before) relation on the states. 

We now show the difference in the definitions of consistent cuts in the
state based and event based model.
\begin{definition}[Consistent cut in state based model]
\label{def:cgs_state}
Under the state based model, $(S,<,\tau)$, of a concurrent computation ,
a subset $T \subseteq S$ of size equal to the width of poset
$(S,<)$ is a consistent
cut if $~\forall s,t \in
T: s~||~t$. 
\end{definition}
The order ``$<$'' over consistent 
cuts is defined using the ``$\leq$'' relation defined 
over width-antichains in Section~\ref{sec:back}. 
Under the state based model, for any two consistent cuts $A, B$ 
we have: 
$A < B$ iff $A \leq B \wedge A \neq B$. Hence, 
$A < B \Rightarrow  \exists a \in A, \exists b \in B: a < b$ in $(S,<)$.
It is clear that the consistent
cuts in state based model correspond to width-antichains of the poset.

The consistent cuts of the state based model of  Fig.~\ref{fig:simple}\subref{fig:run-state} are: 
$\{a_0,e_0\},\{a',e_0\},
\{a_0,e'\},\{b',e_0\},\{a',e'\},
\{c',e_0\},$\\
$\{b',e'\},\{c',e'\},\{b',f'\},$
$\{c',f'\},\{b',g'\},\{c',g'\}$.
%

At this point we have two notions of a consistent cut of a concurrent computation: one in the
event based model (Defn.~\ref{def:cgs_event}) and the
other in the state based model 
(Defn.~\ref{def:cgs_state}). 
Dilworth \cite{Dilworth50} proved that the set of all width-antichains also forms a distributive lattice, and Koh \cite{Koh} showed that every finite
distributive lattice can be generated as the set of width-antichains of a poset. The lattice of width-antichains is in general a sublattice of the
lattice of downsets. Thus, the notion of consistent global states is different in event based and state based models, a distinction that has not been explored in distributed computing literature.
It is also important to question that what is the relationship between these two definitions?
In the next section, we show that there is a `one-to-one' correspondence between consistent cuts in the event based
and the state based models. 

\remove{
\begin{proof}
Let $G$ be any consistent cut of $(E, \ra, \pi)$. We will show how to construct the corresponding
consistent cut $T$ of \sposet.
Suppose that $G$ contains at least one event from $C_i$.
Then, let $(i,k)$ be the largest event from process $C_i$. In this case, we add $[i,k]$ to $T$.
If $G$ does not contain any event from $C_i$, then we add $[i,0]$ to $T$.
Clearly, $T$ has exactly $n$ states, one from each process.  We show that the cut $T$ is also consistent.
If not, suppose $[i,s]$ and $[j,t]$ be two states in $T$ such that $[i,s] < [j,t]$. This implies that
$(i,s+1) \ra (j,t)$, under the event based model, contradicting that $G$ is consistent because $G$ contains $(j,t)$ but does not contain $(i,s+1)$.
It is also easy to verify that the mapping from the set of consistent cuts is one-to-one.

Conversely, given a consistent cut $T$ 
in the state based model, we construct a consistent 
cut in event based model in $1-1$ manner as follows. For all states $[i,k] \in T$ we include
all events $(i,k')$ such that $k ' \leq k$. Note that when $k$ equals $0$, no events from $C_i$ are included.
It can again be easily verified that whenever $T$ is a consistent 
cut in state model, $G$ is a consistent cut in event model.
\end{proof}
Let $G$ be any consistent cut of $(E, \ra, \pi)$. We will show how to construct the corresponding
consistent cut $T$ of \sposet.
Suppose that $G$ contains at least one event from $C_i$.
Then, let $(i,k)$ be the largest event from process $C_i$. In this case, we add $[i,k]$ to $T$.
If $G$ does not contain any event from $C_i$, then we add $[i,0]$ to $T$.
Clearly, $T$ has exactly $n$ states, one from each process.  We show that the cut $T$ is also consistent.
If not, suppose $[i,s]$ and $[j,t]$ be two states in $T$ such that $[i,s] < [j,t]$. This implies that
$(i,s+1) \ra (j,t)$, under the event based model, contradicting that $G$ is consistent because $G$ contains $(j,t)$ but does not contain $(i,s+1)$.
It is also easy to verify that the mapping from the set of consistent cuts is $1-1$.

Conversely, given a consistent cut $T$ 
in the state based model, we construct a consistent 
cut in event based model in $1-1$ manner as follows. For all states $[i,k] \in T$ we include
all events $(i,k')$ such that $k ' \leq k$. Note that when $k$ equals $0$, no events from $C_i$ are included.
It can again be easily verified that whenever $T$ is a consistent cut, $G$ is a consistent prefix.
\end{proof}
}

\subsection{Translation between event based and state based models}
\label{subsec:translate}

Let $\hat{E} = (E, \rightarrow, \pi)$ be an 
event based model of a computation on $n$ processes/threads.
Let $\pi$ partition $E$ into $n$ chains: ($E_i ~|~ i = 1, 2, \ldots n$).
For each
$i= 1, 2, \ldots n$, let $|E_i| = n_i (\geq 1)$. Suppose the elements of $E_i$ are named as
follows:
$E_i : (i, 1)\rightarrow(i, 2)\rightarrow \ldots \rightarrow(i, n_i - 1)\rightarrow(i, n_i)$. 
Note that if an event is `shared' between two processes $i$ and $j$, then it will have 
two labels $(i, x)$ and $(j, y)$, with $1 \leq x \leq n_i$, and  $1 \leq y \leq n_j$.\footnote{By extension of this rule, 
an event that is `shared' between $k$ processes would 
have $k$ labels.} 
We generate a state based model $\hat{S} = (S, <, \tau)$
from $\hat{E}$ using the following  
function. \\
\\
\fbox{
\parbox{0.45\textwidth}{
{\bf Function $ES$ Transform}: \\
For each  $i= 1, 2, \ldots n$, let $S_i$ be an $|n_i + 1|$ element chain where $n_i = |E_i|$ as above. Define the elements in $S_i$ as follows: 
\[
   S_i:[i,0] < [i,1] < ... < [i,n_i - 1] < [i,n_i].
\]
Let $S = \bigcup_{i=1}^{n}  S_i$ and define a binary relation ``$<$''
on $S$ by putting
$ [i,r] < [j,s] $ in $S$ $(i,j = 1,2,\ldots,n; 0\leq r\leq n_i, 0\leq s
\leq n_j)$~ {\em iff}:
\begin{itemize}\itemsep0pt
\item $r < s$, if $i = j$ 
\item $(i,r + 1)$ and $(j,s)$ are both present in $E$
\item and $(i,r + 1) < (j,s)$ in $E$ if $i\neq j$.
\end{itemize}
}
}

A special case of this transform, on disjoint 
chain partitions, was used by Koh in \cite{Koh} to prove properties of lattice
of width-antichains. 

Fig.~\ref{fig:koh_transform} gives illustrations of the application of
this transform.

In the generated $\hat{S}$, $\tau$ is dependent on 
the chain partition 
$\pi$ in $\hat{E}$. Intuitively, every state chain $S_i$ 
contains the states of process $i$, such 
that event $(i, k)$ in $E_i$, here $1 \leq k \leq n_i$, causes a transition from state $[i, k-1]$
to $[i, k]$. 
On chain $S_i$, the state $[i, 0]$ represents the initial state of the process $i$, and $[i, n_i]$ represents the final state of the process $i$.
  The worst-case complexity of the $ES$ transform
is $\mathcal{O}(|E|^2)$.

\tikzstyle{place}=[circle,draw=black,fill=white,thick,inner sep=0pt,minimum size=2mm]
\tikzstyle{nsplace}=[circle,fill=black,draw=black,thick,inner sep=0pt,minimum size=2mm]
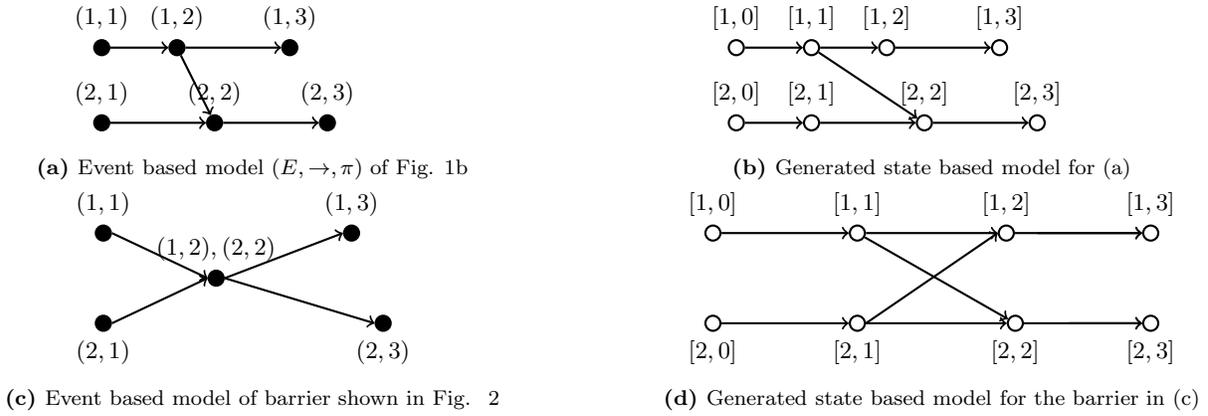
\begin{figure*}[!htb]
\centering
\begin{subfigure}[b]{0.49\textwidth}
\centering
\begin{tikzpicture}
\clip (-1,-0.25) rectangle (5,1.6) ;
\node at ( 0,1) [nsplace] [label={above:$(1,1)$}] (a) {};
\node at ( 1,1) [nsplace] [label={above:$(1,2)$}] (b) {};
\node at ( 2.5,1) [nsplace] [label={above:$(1,3)$}] (c) {};
\node at ( 0,0) [nsplace] [label={above:$(2,1)$}] (e) {};
\node at ( 1.5,0) [nsplace] [label={above:$(2,2)$}] (f) {};
\node at ( 3,0) [nsplace] [label={above:$(2,3)$}] (g) {};
\draw [thick, ->] (a.east) -- (b.west);
\draw [thick, ->] (b.east) -- (c.west);

\draw [thick, ->] (e.east) -- (f.west);
\draw [thick, ->] (f.east) -- (g.west);

\draw [thick,->] (b) -- (f);
\end{tikzpicture}
\caption{Event based model $(E,\ra,\pi)$ of~\figref{fig:run-event}}
\label{fig:ex_chains}
\end{subfigure}
\hfill
\begin{subfigure}[b]{0.49\textwidth}
\centering
\begin{tikzpicture}
\clip (-1.8,-0.25) rectangle (5,1.6) ;
\node at ( -1,1) [place] [label={above:$[1,0]$}] (a0) {};
\node at ( 0,1) [place] [label={above:$[1,1]$}] (a) {};
\node at ( 1,1) [place] [label={above:$[1,2]$}] (b) {};
\node at ( 2.5,1) [place] [label={above:$[1,3]$}] (c) {};
\node at ( -1,0) [place] [label={above:$[2,0]$}] (e0) {};
\node at ( 0,0) [place] [label={above:$[2,1]$}] (e) {};
\node at ( 1.5,0) [place] [label={above:$[2,2]$}] (f) {};
\node at ( 3,0) [place] [label={above:$[2,3]$}] (g) {};

%
\node at (3.5, 1) [] (end1) {};
\node at (3.5, 0) [] (end2) {};

\draw [thick, ->] (a.east) -- (b.west);
\draw [thick, ->] (a0.east) -- (a.west);
\draw [thick, ->] (e0.east) -- (e.west);
\draw [thick, ->] (b.east) -- (c.west);

\draw [thick, ->] (e.east) -- (f.west);
\draw [thick, ->] (f.east) -- (g.west);

\draw [thick,->] (a) -- (f);
\end{tikzpicture}
\caption{Generated state based model for 
 (a)}
\label{fig:sppi}
\end{subfigure}
\begin{subfigure}[b]{0.49\textwidth}
\centering
\begin{tikzpicture}[scale=0.6]
\node at ( 0,2) [blackCirc] [label={above:$(1,1)$}] (a) {};
\node at (2.5,1) [blackCirc] [label={above:$(1,2), (2,2)$}] (b) {};
\node at (5.5,2) [blackCirc] [label={above:$(1,3)$}] (c) {};
\node at (0,0) [blackCirc] [label={below:$(2,1)$}] (e) {};
\node at (6.2,0) [blackCirc] [label={below:$(2,3)$}] (g) {};
\node at (7.2, 2) [] (end1) {};
\node at (7.2, 0) [] (end2) {};
\draw [thick,->] (a.east) -- (b.west);
\draw [thick,->] (b.east) -- (c.west);

\draw [thick,->] (e.east) -- (b.west);
\draw [thick,->] (b.east) -- (g.west);
\end{tikzpicture}
\caption{Event based model of barrier shown in Fig.~
\ref{fig:barrier}}
\label{fig:barrier-koh-event}
\end{subfigure}
\hfill
\begin{subfigure}[b]{0.49\textwidth}
\centering
\begin{tikzpicture}[scale=0.6]
\node at (1.7,2) [place] [label={above:$[1,1]$}] (a1) {};
\node at (1.7,0) [place] [label={below:$[2,1]$}] (onefour) {};
\node at (5,2) [place] [label={above:$[1,2]$}] (c1) {};
\node at (5.2,0) [place] [label={below:$[2,2]$}](f1) {};
\node at (8.2,2) [place] [label={above:$[1,3]$}] (end1) {};
\node at (8.2,0) [place] [label={below:$[2,3]$}] (end2) {};
\draw [thick,->] (c.east) -- (end1.west);
\draw [thick,->] (g.east) -- (end2.west);
\node at (-1.5,2) [place] [label={above:$[1,0]$}] (sonezero) {};
\node at (-1.5,0) [place] [label={below:$[2,0]$}] (stwozero) {};
\draw [thick,->] (sonezero.east) -- (a1.west);
\draw [thick,->] (onefour.east) -- (c1.west);
\draw [thick,->] (a1.east) -- (c1.west);
\draw [thick,->] (c1.east) -- (end1.west);
\draw [thick,->] (stwozero.east) -- (onefour.west);
\draw [thick,->] (onefour.east) -- (f1.west);
\draw [thick,->] (f1.east) -- (end2.west);
\draw [thick,->] (a1) -- (f1);
\end{tikzpicture}
\caption{Generated state based model for 
the barrier in (c)}
\label{fig:barrier-koh-state}
\end{subfigure}
\caption{Event to State transform for computations of earlier examples}
\label{fig:koh_transform}
\end{figure*}
We show that this $\hat{S}$, generated by applying the 
$ES$ transform on $\hat{E}$, is a valid state based model 
of the concurrent computation, i.e., it is a 
width-extensible poset.
We first show that it is a poset. 
\begin{lemma}
\label{lem:poset}
If $\hat{S}$ is the result of applying $ES$ transform on an event based
model $\hat{E}=(E,\ra,\pi)$ of a concurrent computation then
$\hat{S}$ is a poset under the ``$<$'' relation.
\end{lemma}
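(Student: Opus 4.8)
The plan is to establish the three defining properties of a strict partial order for the relation ``$<$'' that the $ES$ transform places on $S$ --- irreflexivity, antisymmetry, and transitivity --- and the slick way to do this is to first restate the transform in a uniform form. For a state $[i,r]\in S$, call $(i,r+1)$ its \emph{exit event} (it belongs to $E$ precisely when $r<n_i$) and $(i,r)$ its \emph{entry event} (it belongs to $E$ precisely when $r\ge 1$), and write $a\preceq b$ for ``$a\ra b$ or $a=b$ in $E$'', so that $\preceq$ is a partial order on $E$ because $\ra$ is a strict one. Unwinding the bullets of the transform, one checks that for all $x,y\in S$, $x<y$ in $S$ iff $\mathrm{exit}(x)$ and $\mathrm{entry}(y)$ both exist and $\mathrm{exit}(x)\preceq\mathrm{entry}(y)$: when $x=[i,r]$ and $y=[i,s]$ lie on the same chain this reads $(i,r+1)\preceq(i,s)$, i.e.\ $r+1\le s$, i.e.\ $r<s$; when they lie on different chains it is precisely the ``both present in $E$'' clause together with $(i,r+1)\ra(j,s)$ (reading a shared event as $\preceq$-related to itself, consistent with the figures). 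I would also record the one structural fact about $\hat E$ needed below: consecutive events on a chain satisfy $(i,m)\ra(i,m+1)$, and more generally $E_i=\{(i,1),\dots,(i,n_i)\}$ is a down-closed segment of its chain.

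With the reformulation in hand I would dispatch the three properties quickly. \textbf{Irreflexivity}: $x<x$ with $x=[i,r]$ would force $(i,r+1)\preceq(i,r)$ with both events present, contradicting $(i,r)\ra(i,r+1)$. \textbf{Antisymmetry}: if $x=[i,r]$ and $y=[j,s]$ satisfy $x<y$ and $y<x$, then all of $(i,r)$, $(i,r+1)$, $(j,s)$, $(j,s+1)$ lie in $E$ with $(i,r+1)\preceq(j,s)$ and $(j,s+1)\preceq(i,r)$; splicing in the chain steps $(j,s)\ra(j,s+1)$ and $(i,r)\ra(i,r+1)$ produces the cycle $(i,r+1)\preceq(j,s)\ra(j,s+1)\preceq(i,r)\ra(i,r+1)$, hence $(i,r+1)\ra(i,r+1)$, contradicting irreflexivity of $\ra$. \textbf{Transitivity}: if $x<y$ and $y<z$ with $y=[j,s]$, then $x<y$ supplies the event $(j,s)$ and $y<z$ supplies $(j,s+1)$, so $(j,s)\ra(j,s+1)$; therefore $\mathrm{exit}(x)\preceq(j,s)\ra(j,s+1)\preceq\mathrm{entry}(z)$, so $\mathrm{exit}(x)\preceq\mathrm{entry}(z)$, and since $\mathrm{exit}(x)$ and $\mathrm{entry}(z)$ exist this is exactly $x<z$ in $S$. (If one prefers to read the transform's third bullet with strict $\ra$ only, the same argument goes through as a direct case split on how many of the three process indices coincide, always routing the middle state $[j,s]$ through $(j,s)\ra(j,s+1)$; nothing changes in substance.)

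The part needing real care --- and what I would flag as the main, if mild, obstacle --- is bookkeeping the boundary indices $0$ and $n_i$: since $[i,0]$ has no entry event and $[i,n_i]$ has no exit event, each time the argument names an event $(i,r)$, $(i,r+1)$, $(j,s)$ or $(j,s+1)$ I must first certify it really lies in $E$ before appealing to ``consecutive events are $\ra$-related'' or to $\ra$-irreflexivity. In every case that certification comes from the ``both present in $E$'' clause of the transform together with down-closedness of $E_i$; with that attention paid, the verification presents no deeper difficulty. Shared events need no special handling, since the whole argument is conducted at the level of the chain labels $(i,m)$ rather than the underlying events of $E$.
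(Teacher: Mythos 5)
Your proof is correct and follows essentially the same route as the paper's: both reduce $<$ on states to $\ra$ on events via the transform and splice relations through the chain step $(j,s)\ra(j,s+1)$ to get a cycle (for asymmetry) or a composite relation (for transitivity). Your uniform $\mathrm{exit}/\mathrm{entry}$ reformulation merely packages the paper's case analysis on coinciding process indices, while adding somewhat more care about boundary indices and shared events than the paper's own argument.
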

\begin{proof}
We show that the relation ``$<$'' on $S$ is transitive and antisymmetric, and thus irreflexive. 
\begin{itemize}
\item {\em Claim (i)} The relation ``$<$" is asymmetric. \\
{\em Proof}:
Let $[i, r],[j, s] \in S$ such that $[i, r]<[j, s]$. Clearly, $[j, s] \not< [i, r]$ if $i=j$; otherwise we would get
 $s \ra r$ in $E$. Assume $i \neq j$ and
$[j, s]<[i, r]$. Then by definition, we have $(i,r+1)\ra(j,s)\ra 
(j, s + 1)\ra(i, r)$ in $E$, which is impossible as it violates
the asymmetry of $\ra$ in $E$. 
\item {\em Claim (ii)} The relation ``$<$" is transitive. \\
{\em Proof}:
Let $[i, r], [j, s],[k, t] \in \hat{S}$ such that $[i, r]<[j, s]$ and $[j, s]<[k, t]$. Assume $i \neq j$
and $k\neq j$. Then we have $(i,r+1) \ra (j,s) \ra (j,s+1) \ra (k, t)$ and hence $(i,r+1) \ra 
(k, t)$ in $E$, which implies that $[i, r]<[k, t]$ whether $i = k$ or $i \neq k$. The cases for
$i= j$ or $j = k$ can be proved similarly.
\end{itemize}
Hence $\hat{S}$ forms a poset under the ``$<$" relation.
\end{proof}
The following lemma proves the `one-to-one' 
relation between consistent cuts of event 
based and state based models of a concurrent computation.  
\begin{lemma}
\label{lem:bij}
Let $\hat{E} = (E, \ra, \pi)$ and $\hat{S} = (S, \mlt, \tau)$ be event and state based models of a concurrent  
computation.
Then there is a bijection between consistent cuts of $\hat{E}$ and $\hat{S}$. 
\end{lemma}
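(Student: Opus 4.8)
The plan is to write down the bijection explicitly and then check it is well defined in each direction; the only real content is verifying that the image sets are again consistent cuts. Given a consistent cut $G$ of $\hat E$, define $\Phi(G)\subseteq S$ by taking, for each chain $E_i$, the last event of $E_i$ lying in $G$, say $(i,k_i)$, and putting its post-state $[i,k_i]$ into $\Phi(G)$; if $G$ contains no event of $E_i$, put $[i,0]$ into $\Phi(G)$. Thus $\Phi(G)=\{[i,k_i] : 1\le i\le n\}$ has exactly one element from each chain $S_i$ of the $ES$ transform. In the other direction, note first (using Lemma~\ref{lem:poset}, so that ``width'' makes sense) that $S_1,\dots,S_n$ partition $S$, hence the width of $(S,<)$ is at most $n$, and that $\{[i,0]:1\le i\le n\}$ is an antichain of size $n$ (there is no event $(j,0)$, so the defining condition of $[i,0]<[j,0]$ can never hold); therefore the width is exactly $n$, and by the characterization following Definition~\ref{def:cgs_state} a consistent cut of $\hat S$ is precisely an antichain of the form $T=\{[i,k_i]:1\le i\le n\}$. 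For such a $T$ let $\Psi(T)$ be the set of all events $e\in E$ such that every label $(i,x)$ of $e$ satisfies $x\le k_i$.

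Next I would verify that $\Phi$ lands in consistent cuts of $\hat S$. By construction $\Phi(G)$ meets each $S_i$ exactly once, so it has size $n$; the point is that it is an antichain. Suppose $[i,k_i]<[j,k_j]$ with $i\ne j$. By the definition of the $ES$ transform this forces $(i,k_i+1)$ and $(j,k_j)$ to be present in $E$ with $(i,k_i+1)\ra(j,k_j)$ (the degenerate case in which $(i,k_i+1)$ and $(j,k_j)$ denote the same shared event is discussed below). In particular $k_j\ge 1$, so $(j,k_j)\in G$; since $G$ is a downset and $(i,k_i+1)\ra(j,k_j)$, we get $(i,k_i+1)\in G$, contradicting the maximality of $k_i$ (or, when $k_i=0$, contradicting that $G$ has no event of $E_i$). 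Hence $\Phi(G)$ is a width-antichain, i.e.\ a consistent cut of $\hat S$. The dual claim, that $\Psi(T)$ is a downset, is proved symmetrically: if $f\in\Psi(T)$ and $e\ra f$, then for a label $(i,x)$ of $e$ one argues $x\le k_i$ --- immediate from the chain order if $f$ also carries a chain-$i$ label, and otherwise $x>k_i$ would give $(i,k_i+1)\ra f$, hence $[i,k_i]<[j,k_j]$ for any chain $j$ carrying a label of $f$, contradicting that $T$ is an antichain. Finally $\Psi\circ\Phi$ and $\Phi\circ\Psi$ are the respective identities essentially by unwinding the definitions: $\Psi(\Phi(G))$ recovers exactly the events of $G$ because $G$ is a downset, and in $\Phi(\Psi(T))$ the last chain-$i$ event admitted by $\Psi(T)$ is $(i,k_i)$, whose post-state is $[i,k_i]$.

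The step I expect to be the main obstacle --- and the one the clean ``one event per chain'' argument above quietly relies on --- is the treatment of shared events, which is exactly the feature (barriers, synchronous messages) the paper is at pains to allow. The fact that must hold is that a shared event is \emph{atomic} with respect to consistent cuts: if $e$ is shared by chains $i$ and $j$ with labels $(i,x)$ and $(j,y)$, the two pre-states $[i,x-1],[j,y-1]$ must be concurrent, the two post-states $[i,x],[j,y]$ must be concurrent, but a pre-state of $e$ on one chain and a post-state of $e$ on the other must be \emph{ordered}; only then does no width-antichain of $(S,<)$ ``straddle'' $e$, which is precisely what rules out consistent cuts of $\hat S$ that would have no downset counterpart in $\hat E$, and is also why $\Psi$ must admit a shared event only when \emph{all} of its chains permit it. Securing this requires reading the event comparison in the $ES$ transform so that a shared event counts as related to itself (equivalently, restricting to computations without shared events, as in the unlabeled case treated by Koh \cite{Koh}); with that reading in place, the remaining bookkeeping --- together with the uniform handling of the initial states $[i,0]$, for which no event $(i,0)$ exists --- is routine and is what I would spell out in each of the checks above.
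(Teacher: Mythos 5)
Your proof is correct and follows essentially the same route as the paper's (Appendix~\ref{sec:app_proof}): map a downset $G$ to the post-states of the per-chain maxima (with $[i,0]$ for empty chains), map a width-antichain $T$ to the union of chain prefixes, and check consistency in each direction. Your explicit treatment of shared events --- requiring $\Psi$ to admit an event only when \emph{all} of its labels lie below the cut, and reading the $ES$ transform so that a shared event orders its pre-state on one chain below its post-state on another --- is in fact more careful than the paper's own proof, which is written as if every event carries a single label.
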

\begin{proof} In Appendix~\ref{sec:app_proof}. \end{proof}
Let us now study the properties of the posets
that model concurrent computations using states.
%
\section{Characteristics of State Based Models of 
Synchronous Concurrent
Computations}
\label{sec:synch}
The event based model of Defn.~\ref{def:event_model}  
accepts chain partitions that allow `shared' 
events, which in turn allows modeling synchronous
executions. We will show that 
posets that model such synchronous
concurrent computations must be width-extensible.
 We  start by showing that $\hat{S}=(S,<)$ constructed from any $(E, \ra, \pi)$ by applying 
the $ES$ transform
is width-extensible.
First, we define the three properties $\omega_1, 
 \omega_2$, and $\omega_3$ of $\hat{S}$.\\
For $1\leq i,j,k \leq n$, $\hat{S} = (S, <, \tau)$:
\begin{itemize}\itemsep0pt   
\item $(\omega_1)$
 $\forall i,j$: $[i, 0]$ $||$ $[j, 0]$. All initial states are concurrent.  
\item $(\omega_2)$  
$\forall i,j$: $[i, n_i]$ $||$ $[j, n_j]$. All final states are concurrent.  
\item $(\omega_3)$  
$\forall i,j,k$, such that for $i\neq j \wedge j\neq k$: $[i,s] < [j,t] \wedge [j,t-1] < [k,u]$ $\Ra [i,s] < [k,u]$.
\end{itemize}
We now prove that these properties are observed in $\hat{S}$.
\begin{lemma}
\label{lem:state}
$\hat{S}$ satisfies  $\omega_1$, $\omega_2$, and $\omega_3$. 
\end{lemma}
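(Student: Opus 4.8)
The plan is to verify each of the three properties $\omega_1$, $\omega_2$, $\omega_3$ directly from the definition of the $ES$ transform, exploiting the irreflexivity, antisymmetry, and transitivity of $\ra$ on $E$ (established already as $\hat{S}$ being a poset in Lemma~\ref{lem:poset}, though here we work directly on $E$). In each case I would argue by contradiction, assuming the claimed concurrency (or implication) fails and deriving a forbidden edge or cycle in $E$.

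For $\omega_1$: suppose $[i,0]$ and $[j,0]$ are comparable with $i \neq j$, say $[i,0] < [j,0]$. By the definition of the $ES$ transform this requires $(i,1) \ra (j,0)$ in $E$ — but $(j,0)$ is not an event of $E$ at all (indices of events in $E_j$ run from $1$ to $n_j$), so no such edge exists. Hence $[i,0] \not< [j,0]$, and symmetrically $[j,0] \not< [i,0]$, giving $[i,0] \parallel [j,0]$. For $\omega_2$: suppose $[i,n_i] < [j,n_j]$ with $i \neq j$; the transform requires $(i,n_i+1)$ to be present in $E$, which it is not, since $E_i$ ends at $(i,n_i)$. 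Again the symmetric case is identical, so $[i,n_i] \parallel [j,n_j]$. Both of these are essentially "boundary" observations: the transform manufactures edges out of state $[i,r]$ only by looking at event $(i,r+1)$, so states with no successor event (the final states) have no outgoing edges, and states reached only from a nonexistent predecessor event (handled via the absence of $(i,0)$ as an event) have the corresponding incoming edges ruled out.

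For $\omega_3$: assume $i \neq j$, $j \neq k$, $[i,s] < [j,t]$ and $[j,t-1] < [k,u]$. Unfolding the transform: $[i,s] < [j,t]$ gives $(i,s+1) \ra (j,t)$ in $E$ (this uses $i \neq j$), and $[j,t-1] < [k,u]$ gives $(j,(t-1)+1) = (j,t) \ra (k,u)$ in $E$ (this uses $j \neq k$), with all four events present in $E$. By transitivity of $\ra$ we get $(i,s+1) \ra (k,u)$ in $E$. If $i \neq k$, this is exactly the condition for $[i,s] < [k,u]$ in $\hat{S}$. If $i = k$, then from $(i,s+1) \ra (i,u)$ and the fact that $E_i$ is totally ordered by $\ra$ we conclude $s+1 \leq u$, hence $s < u$, which is precisely the $i=k$ clause of the transform's definition giving $[i,s] < [k,u]$. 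Either way $\omega_3$ holds.

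I do not expect a serious obstacle here; the lemma is a routine unpacking of the $ES$ transform's defining clauses combined with the order axioms on $\ra$. The one point requiring a little care is the bookkeeping of indices — remembering that events of $E_i$ are indexed $1,\dots,n_i$ while states of $S_i$ are indexed $0,\dots,n_i$, that an edge into or out of $[i,r]$ is witnessed by event $(i,r+1)$, and that the "both present in $E$" side condition in the transform is what lets the boundary arguments for $\omega_1$ and $\omega_2$ go through. In the $\omega_3$ argument the only subtlety is splitting on whether $i = k$ to match the two-clause structure of the transform's definition of $<$; I would present that split explicitly.
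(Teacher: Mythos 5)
Your proof is correct and takes essentially the same approach as the paper's: $\omega_1$ and $\omega_2$ are the boundary observations that an edge into $[j,0]$ or out of $[i,n_i]$ would require the nonexistent event $(j,0)$ or $(i,n_i+1)$, and $\omega_3$ is the unfolding $(i,s+1)\rightarrow(j,t)\rightarrow(k,u)$ followed by transitivity of $\rightarrow$. Your explicit split on $i=k$ versus $i\neq k$ at the end of the $\omega_3$ argument is a small point of extra care that the paper's proof elides, but it does not change the substance.
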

\begin{proof}
 $\omega_1$ follows immediately from the construction of $\hat{S}$ because there is no state $[i,s]$ such that $[i,s] < [i,0]$ for any $i$. That is, on any state chain $S_i$ there does not exist a state that is a precursor to the
 initial state of $S_i$. Hence, all the initial states must be concurrent.
 
  Similarly, $\omega_2$ follows when applied to the last states of $S_i$ in a similar manner because there is no state on any state chain $S_i$ that is a successor of the final state of $S_i$.
  
 $\omega_3$:  $[i,s] < [j,t] \wedge [j,t-1] < [k,u]$. Using the construction rules, we can
 infer that $(i,s+1) \ra (j,t) \wedge (j,t) \ra (k,u)$ in $E$. Which by transitivity means
 $(i,s+1) \ra (k,u)$. Hence, $[i,s] < [k,u]$ in $S$.
\end{proof}
The first condition, $\omega_1$, ensures that all $n$ initial states are pairwise concurrent. This is a valid requirement as 
all the processes would start in some default (individual) 
state, and at the start of the computation these states 
would not have any dependency amongst them. 
The second condition, given by $\omega_2$,  ensures that all $n$ final states are pairwise concurrent. This is also a valid requirement because 
irrespective of the events/commands executed, all the
$n$ processes end up in some individual final state 
at the end of the computation.
Hence, 
when the computation is finished all the final states 
would not have any dependency amongst them, and thus 
be concurrent to each other. \\
The third condition, $\omega_3$, guarantees that causal dependency 
between events under the event based model translates 
to causal dependency between corresponding states  
under the state based model. Note that the labels 
of states in the dependency relation are 
different from those of events. Suppose that for 
two events 
$e$ and $f$, we have $e \ra f$ under the 
event based model, $\hat{E}$. Then $\omega_3$
translates that dependency from $\hat{E}$ to $\hat{S}$
such that the state preceding the execution 
of $e$ is guaranteed to have existed \underline{before} 
the state that is generated \underline{after} the
execution of $f$. 

We now show that any state based model that is generated 
by applying the $ES$ transform on an event based model 
is a valid state based model. To be a valid 
state based model, it is sufficient that the generated 
poset be width-extensible. 

\begin{thm}
\label{t:subcut}
  Let $\hat{S} = (S,<,\tau)$ be a state based model for some concurrent computation. If $\hat{S}$
satisfies $\omega_1$,~$\omega_2$ and $\omega_3$,  
  then the poset $(S, <)$ is width-extensible.
\end{thm}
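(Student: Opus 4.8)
The plan is to show that every antichain $A$ in $(S,<)$ is contained in a width-antichain, invoking the characterization already proved (it suffices to handle antichains of size at most two), and to build the required width-antichain explicitly using the state chains $S_i$ together with the three properties $\omega_1$, $\omega_2$, $\omega_3$. First I would establish a helper claim: for each chain $S_i$ and each pair of states $s,t$ lying on chains other than $S_i$ (or the degenerate case of a single state), the set of elements of $S_i$ concurrent to both $s$ and $t$ forms a \emph{nonempty} interval in $S_i$. Nonemptiness is where $\omega_1$ and $\omega_2$ enter: given $s = [j,u]$, I would argue that the largest index $p$ with $[i,p] < [j,u]$ (taking $p=-1$, i.e. no such state, if none exists) and the smallest index $q$ with $[j,u] < [i,q]$ (taking $q = n_i+1$ if none exists) satisfy $p < q-1$, so the indices strictly between them give states of $S_i$ incomparable to $s$; here $\omega_1$ prevents $[j,u]$ from being below $[i,0]$ when $j$ has no predecessor constraint, and $\omega_2$ symmetrically prevents it from dominating $[i,n_i]$, guaranteeing the interval is nonempty. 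This is essentially the content that makes the pair $\{s,t\}$ width-extensible, so the real work is packaging it chain by chain.

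Next I would use $\omega_3$ to show these per-chain intervals interact correctly. The key point is: if $[i,r]$ is concurrent to $[j,s]$, then it is "stably" concurrent in the sense that $\omega_3$ forbids certain transitivity failures, so that when I pick a state from $S_i$ concurrent to everything chosen so far, adding it does not create a comparability with a later choice. Concretely, I would process the chains $S_1, S_2, \dots$ in order, maintaining an antichain $A' \supseteq A$; at chain $S_i$, if $A'$ already meets $S_i$ I skip it, otherwise the set of states on $S_i$ concurrent to all of $A'$ is a nonempty interval (by the helper claim extended inductively from size two via the consecutive-interval intersection argument used in the proof of the earlier theorem), and I adjoin any one of them. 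After all $n$ chains are processed, $A'$ has exactly one element per chain, hence is an antichain of size $n = w$, i.e. a width-antichain containing $A$.

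The cleanest way to organize this is to reduce to the earlier theorem (Theorem~1 in the excerpt): it suffices to verify that every antichain of size at most two in $(S,<)$ extends to a width-antichain. For a singleton $\{[i,r]\}$, I would exhibit a width-antichain through it by choosing, on every other chain $S_j$, a state concurrent to $[i,r]$ using the helper claim; the fact that these choices are mutually concurrent follows because $\omega_1$/$\omega_2$ give the boundary cases and $\omega_3$ rules out the "zig-zag" $[j,t] < [\,\cdot\,] $ and $[\,\cdot\,] < [j,t]$ patterns that would force a comparability. For a pair $\{[i,r],[j,s]\}$ the argument is the same with two constraints to respect simultaneously, and the nonempty-intersection-of-intervals step is exactly as in the earlier proof.

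The main obstacle I anticipate is the inductive step guaranteeing that a newly chosen state on $S_i$ is concurrent to the states chosen on \emph{already-processed} chains, not just to the original set $A$ — that is, showing the "concurrent to all of $A'$" interval stays nonempty as $A'$ grows. This is precisely where $\omega_3$ must be used in full strength: it propagates incomparabilities so that the interval of $S_i$-states concurrent to $A'$ is the intersection of the pairwise intervals, and the consecutive-interval (Helly-on-a-line) argument from the earlier theorem's proof then keeps it nonempty. Once that interaction is nailed down, the rest is bookkeeping.
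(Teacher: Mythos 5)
Your proposal takes essentially the same route as the paper's proof: for each chain $S_i$ not represented in the antichain and each state $s$ in the antichain, form the interval $I_i(s)$ of states of $S_i$ concurrent to $s$; show these intervals are nonempty and pairwise intersecting; apply the Helly property of intervals on a line to pick a common element; and repeat chain by chain (the paper extends by one element at a time, which also disposes of your worry about compatibility with already-processed chains).

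There is, however, one incomplete step: you attribute the nonemptiness of a single interval $I_i(s)$ to $\omega_1$ and $\omega_2$ alone. Those two conditions only control the boundary cases --- with $p$ the largest index such that $[i,p] < s$ and $q$ the smallest such that $s < [i,q]$, they guarantee that neither quantity is pushed to an extreme (i.e., $s \not< [i,0]$ and $[i,n_i] \not< s$). They do not exclude the interior degenerate case $q = p+1$, in which $s$ sits strictly between two consecutive states of $S_i$ and no state of $S_i$ is concurrent to $s$; this is exactly what happens to the state $b$ in Fig.~\ref{fig:invalidB}, where $d < b < e$ with $d$ and $e$ adjacent on the other chain, even though that poset has concurrent initial and final states. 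Excluding this case needs $\omega_3$: from $[j,u] < [i,p+1]$ and $[i,p] < [j,u]$, condition $\omega_3$ yields $[j,u] < [j,u]$, contradicting irreflexivity --- this is precisely Case~2 in the paper's nonemptiness argument. Since you already deploy $\omega_3$ against the analogous zig-zag pattern in the pairwise-intersection step, the repair is immediate and the rest of your argument stands.
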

\begin{proof}
   We show that any antichain $A \subset S$ can be extended to a width-antichain.
  It is sufficient to show that when $|A| < n$, there exists an antichain $A \subset B$ such
  that  $|B| = |A|+1$.  Consider any process
  $C_i$ that does not contribute a state to $A$.  We will show that there
  exists a state in $S_i$ that is concurrent with all states in $A$.
  Let $s$ and $s'$ be two distinct states in $A$. 
  
  We first claim that for any state $s$ and any process $C_i$, there exists a nonempty
  sequence of consecutive states 
called the ``{\em \inx{interval}} concurrent to $s$ on $C_i$'' and 
denoted by $I_i(s)$ such that:
  \begin{enumerate}
  \item $I_i(s) \subseteq S_i$  --- i.e., the
    interval consists of only states from process $C_i$, and
  \item $\forall t \in I_i(s) : t~||~s$ --- i.e., all states in
    the interval are concurrent with $s$.
  \end{enumerate}

For a state $v \in S_i$, let $index(v)$ denote the index of state $v$ on $S_i$. Thus $0 \leq index(v) \leq n_i$. 
Define $I_i(s).lo
  = \min \{ v \: | \: v \in S_i \mya v \not\mlt s\}$.  This is well-defined
  since $[i, n_i] \not\mlt s$ due to $\omega_2$.  
Similarly, on account
  of $\omega_1$, we can define $I_i(s).hi = \max \{ v \: | \: v \in S_i \mya s
 \not\mlt v\}$.  
  We show that $I_i(s).lo \leq I_i(s).hi$ by the following case
  analysis.\\ 
  {\bf Case 1}: There exists $v:I_i(s).hi < v < I_i(s).lo$.\\
  Since $v <
  I_i(s).lo$ implies $v < s$ and $I_i(s).hi < v$ implies $s < v$,
  we get a contradiction ($v < s < v$).\\ 
  \hfill\\
  {\bf Case 2}: $index(I_i(s).hi) + 1 =  index(I_i(s).lo)$.\\
  Let $I_i(s).lo$ be the $r^{th}$ state on $S_i$, i.e., $I_i(s).lo = [i,r]$.
Then, $I_i(s).hi = [i,r-1]$. Let $s$ correspond to state $[j,t]$.
From the definition of $I_i(s).lo$, $[i,r-1] < [j,t]$.
From the definition of $I_i(s).hi$, $[j,t] < [i,r]$.
We now have, $[j,t] < [i,r]$ and $[i,r-1] < [j,t]$. From
$\omega_3$, we get $[j,t] < [j,t]$ which contradicts irreflexivity of $<$.

  From the above discussion it follows that $I_i(s).lo \leq
  I_i(s).hi$. Furthermore, for any state $t$ such that $I_i(s).lo \leq t
  \leq I_i(s).hi$, $t \not\mlt s$ and $s \not\mlt t$ holds.
Now that our claim holds, we know that $I_i(s)$ and $I_i(s')$
are both non-empty.
We show that $I_i(s)
  \cap I_i(s') \neq \emptyset$.  If not, without loss of generality assume that
  $I_i(s).hi \mlt I_i(s').lo$. Now there are two possible cases. \\
  \hfill\\
  {\bf Case 1}: $index(I_i(s).hi) + 1 = index(I_i(s').lo)$.\\
  Let $I_i(s).hi$ be $r^{th}$ state on $S_i$, i.e., $I_i(s).hi = [i,r]$.
  Then, $I_i(s').lo = [i, r+1]$. Suppose that $s = [j,u]$ and $s'=[k,v]$.
  From the definition of $I_i(s).hi$ we get that $[j,u] < [i,r+1]$.
  From the definition of $I_i(s').lo$ we get that $[i,r] < [k,v]$.
  Hence, from $\omega_3$, we get that $[j,u] < [k,v]$ --- contradicting that $s$ and $s'$ are concurrent.\\
\hfill\\
  {\bf Case 2}: There exists $v:I_i(s).hi < v < I_i(s').lo$.\\
  This implies that $s <
  v$ (because $I_i(s).hi$ precedes $v$) and $v < s'$ (because $v$
  precedes $I_i(s').lo$). Thus $s < s'$, a contradiction with
  $A$ being an antichain.  Therefore, $I_i(s) \cap I_i(s') \neq \emptyset$.  

  Because any interval $I_i(s)$ is a total order, it follows that: \[
  \bigcap_{s \in A} I_i(s) \neq \emptyset \]  We now choose any state in
  $\bigcap_{s \in A} I_i(s)$ to extend $A$.
\end{proof}

\remove{
\begin{proof}

   We show that any antichain $A \subset S$ can be extended to a width-antichain.
  It is sufficient to show that when $|A| < n$, there exists an antichain $A \subset B$ such
  that  $|B| = |A|+1$.  Consider any process
  $C_i$ that does not contribute a state to $A$.  We will show that there
  exists a state in $S_i$ that is concurrent with all states in $A$.
  Let $s$ and $s'$ be two distinct states in $A$. 
  
  We first claim that for any state $s$ and any process $C_i$, there exists a nonempty
  sequence of consecutive states 
called the ``~{\em \inx{interval}} concurrent to $s$ on $C_i$'' and 
denoted by $I_i(s)$ such that:
  \begin{enumerate}
  \item $I_i(s) \subseteq S_i$  --- i.e., the
    interval consists of only states from process $C_i$, and
  \item $\forall t \in I_i(s) : t || s$ --- i.e., all states in
    the interval are concurrent with $s$.
  \end{enumerate}

For a state $v \in S_i$, let $index(v)$ denote the index of state $v$ on $S_i$. Thus $0 \leq index(v) \leq n_i$. 
Define $I_i(s).lo
  = \min \{ v \: | \: v \in S_i \mya v \not\mlt s\}$.  This is well-defined
  since $[i, n_i] \not\mlt s$ due to $\omega_2$.  
Similarly, on account
  of $\omega_1$, we can define $I_i(s).hi = \max \{ v \: | \: v \in S_i \mya s
 \not\mlt v\}$.  

  We show that $I_i(s).lo \leq I_i(s).hi$ by the following case
  analysis.\\ 
  {\em Case 1}: There exists $v:I_i(s).hi < v < I_i(s).lo$.\\
  Since $v <
  I_i(s).lo$ implies $v < s$ and $I_i(s).hi < v$ implies $s < v$,
  we get a contradiction ($v < s < v$).\\ 
  {\em Case 2}: $index(I_i(s).hi) + 1 =  index(I_i(s).lo$.\\
  Let $I_i(s).lo$ be $r^{th}$ state on $S_i$, i.e., $I_i(s).lo = [i,r]$.
Then, $I_i(s).hi = [i,r-1]$. Let $s$ correspond to state $[j,t]$.
From the definition of $I_i(s).lo$, $[i,r-1] < [j,t]$.
From the definition of $I_i(s).hi$, $[j,t] < [i,r]$.
We now have, $[j,t] < [i,r]$ and $[i,r-1] < [j,t]$. From
$\omega_3$, we get $[j,t] < [j,t]$ which contradicts irreflexivity of $<$.

  From the above discussion it follows that $I_i(s).lo \leq
  I_i(s).hi$. Furthermore, for any state $t$ such that $I_i(s).lo \leq t
  \leq I_i(s).hi$, $t \not\mlt s$ and $s \not\mlt t$ holds.

From the above claim, we know that $I_i(s)$ and $I_i(s')$
are both non-empty.
We show that $I_i(s)
  \cap I_i(s') \neq \emptyset$.  If not, without loss of generality assume that
  $I_i(s).hi \mlt I_i(s').lo$.\\ 
  {\em Case 1}: $index(I_i(s).hi) + 1 = index(I_i(s').lo)$.\\
  Let $I_i(s).hi$ be $r^{th}$ state on $S_i$, i.e., $I_i(s).hi = [i,r]$.
  Then, $I_i(s').lo = [i, r+1]$. Suppose that $s = [j,u]$ and $s'=[k,v]$.
  From the definition of $I_i(s).hi$ we get that $[j,u] < [i,r+1]$.
  From the definition of $I_i(s').lo$ we get that $[i,r] < [k,v]$.
  Hence, from $\omega_3$, we get that $[j,u] < [k,v]$ contradicting that $s$ and $s'$ are concurrent.

  {\em Case 2}: There exists $v:I_i(s).hi < v < I_i(s').lo$.\\
  This implies that $s <
  v$ (because $I_i(s).hi$ precedes $v$) and $v < s'$ (because $v$
  precedes $I_i(s').lo$). Thus $s < s'$, a contradiction with
  $(A)$ being an antichain.  Therefore, $I_i(s) \cap I_i(s') \neq \emptyset$.  

  Because any interval $I_i(s)$ is a total order, it follows that: \[
  \bigcap_{s \in A} I_i(s) \neq \emptyset \]  We now choose any state in
  $\bigcap_{s \in A} I_i(s)$ to extend $A$.

\end{proof}
  }

\newcommand{\siminusone}{$s_{i-1}$}
\newcommand{\si}{$s_i$}
\newcommand{\tjminusone}{$t_{j-1}$}
\newcommand{\tj}{$t_j$}
\newcommand{\strel}{(\siminusone, \si) $\ra$ (\tjminusone, \tj)}
\newcommand{\surel}{(\siminusone, \si) $\ra$ $(u_{k-1}, u_k)$}
  
 We have established that every 
 poset that provides the three conditions
$\omega_1, \omega_2$, and $\omega_3$ is width-extensible. 
We now show the converse --- every
width-extensible poset guarantees these three conditions. 

\begin{thm}
\label{thm:suff}
Let  $(S, <)$ be a width-extensible poset. Consider any chain-partition $\tau$ of $(S, <)$.
Then, $\hat{S}=(S, <, \tau)$ satisfies $\omega_1$, $\omega_2$ and $\omega_3$.
\end{thm}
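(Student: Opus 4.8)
The plan is to derive all three properties from a single consequence of width-extensibility, applied only to antichains of size one and two. Write $w$ for the width of $(S,<)$; the chain partition relevant here has exactly $w$ blocks $S_1,\dots,S_w$ (if $\tau$ had more than $w$ blocks, $\omega_1$ — asserting that all block-minima are pairwise concurrent — would already be impossible, and a partition of size $w$ exists by Dilworth's theorem), and I label $S_i:[i,0]<[i,1]<\cdots<[i,n_i]$. The one fact I will invoke repeatedly is: \emph{if $A$ is an antichain that meets no element of a chain $S_j$, then $S_j$ contains a state concurrent with every element of $A$}. To see it, extend $A$ to a width-antichain $W$; since $|W|=w$, $W$ lies inside the $w$ chains, and each chain can contribute at most one element of an antichain, so $W$ meets every chain in exactly one state, in particular one state of $S_j$, and that state is concurrent with all of $A\subseteq W$.

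Properties $\omega_1$ and $\omega_2$ are then immediate. For $\omega_1$, if $[i,0]<[j,0]$ for distinct chains $i,j$, apply the fact with $A=\{[i,0]\}$ to get $[j,r]\in S_j$ concurrent with $[i,0]$; but $[i,0]<[j,0]\le[j,r]$ forces $[i,0]<[j,r]$, a contradiction. Property $\omega_2$ is the order-dual: if $[i,n_i]<[j,n_j]$, take $[i,r]\in S_i$ concurrent with $[j,n_j]$ and note $[i,r]\le[i,n_i]<[j,n_j]$.

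The substantive step is $\omega_3$. Assume $i\neq j$, $j\neq k$, $[i,s]<[j,t]$ and $[j,t-1]<[k,u]$ (note $\omega_1$ forces $t\ge 1$, so $[j,t-1]$ is well defined). I would first prove the sub-claim $(\dagger)$: whenever $i\neq j$ and $[i,s]<[j,t]$, then $[j,t-1]\not<[i,s]$ — otherwise $[i,s]$ sits strictly between the consecutive states $[j,t-1]$ and $[j,t]$ of $S_j$, so no state of $S_j$ can be concurrent with $[i,s]$ (anything at index $\le t-1$ lies below $[i,s]$, anything at index $\ge t$ lies above it), contradicting that $\{[i,s]\}$ extends to a width-antichain. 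Given $(\dagger)$, suppose for contradiction $[i,s]\not<[k,u]$. If $i=k$ then $[i,u]\le[i,s]$, so $[j,t-1]<[i,u]\le[i,s]$ violates $(\dagger)$; if $i\neq k$ (so $i,j,k$ are pairwise distinct) and $[k,u]<[i,s]$, then $[j,t-1]<[k,u]<[i,s]$ likewise violates $(\dagger)$; and in the remaining case, $i\neq k$ with $[i,s]~||~[k,u]$, the pair $\{[i,s],[k,u]\}$ is an antichain missing $S_j$, so some $[j,v]\in S_j$ is concurrent with both — but $[j,t-1]<[k,u]$ forces $v\ge t$ while $[i,s]<[j,t]$ forces $v\le t-1$, which is impossible. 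Hence $[i,s]<[k,u]$, and $\omega_3$ holds.

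I expect the only real obstacle to be spotting $(\dagger)$ — the ``no state of another chain sits strictly between two consecutive states of a chain'' fact — and then the third case of $\omega_3$, where the two-element antichain $\{[i,s],[k,u]\}$ must be pinched from below by $[j,t-1]$ and from above by $[j,t]$ to contradict the existence of a concurrent partner on $S_j$; everything else is bookkeeping. The one point the write-up must be careful about is the standing assumption that $\tau$ has exactly $w$ blocks: without it, ``extend to a width-antichain and read off its state on $S_j$'' is unavailable, and indeed $\omega_1$ then fails outright (e.g.\ the two-chain partition $(\{a\},\{c\})$ of the chain $a<c$).
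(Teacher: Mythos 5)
Your proof is correct and takes essentially the same route as the paper's: argue the contrapositive by exhibiting a one- or two-element antichain that cannot be extended to a width-antichain, with the same case split on how $[k,u]$ relates to $[i,s]$ (your lemma $(\dagger)$ merely repackages the paper's Case~1 by pinching $[i,s]$ between the consecutive states $[j,t-1]$ and $[j,t]$ of $S_j$, where the paper instead rules out every candidate state on $C_k$). Your caveat that $\tau$ must have exactly $w$ blocks is a genuine and worthwhile correction: the paper states the result for ``any chain-partition'' while its proof, like yours, silently assumes a width chain partition, without which $\omega_1$ fails exactly as your two-block partition of a two-element chain shows.
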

\begin{proof}
We show the contrapositive. If $\omega_1$ is violated, then there exists an initial state 
$t$ such that there exists a state $s$ different from $t$ which is less than  $t$.
Then, $s$ is less than all states in the process containing $t$. Therefore, the
antichain $\{t\}$ cannot be extended to a width-antichain. The proof for $\omega_2$ is dual.

If $\omega_3$ is violated, then there exist $[i,s]$, $[j,t]$ and $[k,u]$, where $i \neq j \wedge j \neq k$, such that
$[i,s] < [j,t]$ and $[j,t-1] < [k,u]$ but $[i,s] \not < [k,u]$.
We now do a case analysis on the relationship between 
$[i,s]$ and $[k,u]$.\\ 
\hfill\\
{\bf Case 1}: $[k,u] < [i,s]$. (Illustrated
in Fig.~\ref{fig:case12}, Appendix~\ref{sec:app_ills}). 
In this case we claim that there is no width-antichain
that contains $[i,s]$. Since $[i,s] < [j,t]$, for any state $w$ on process $C_j$ that is concurrent
with $[i,s]$, we get $w \leq [j,t-1]$. Since $[j,t-1] < [k,u]$ none of the states on process $C_k$ greater than
$[k,u]$ are eligible to be in the width-antichain with $w$. Furthermore, all states
less than or equal to $[k,u]$ are ineligible because $[k,u] < [i,s]$. \\
\hfill\\
{\bf Case 2}: $[k,u]$ is incomparable with $[i,s]$. In this case we claim that there is no width-antichain
 that includes both $[k,u]$ and $[i,s]$. No state greater than or equal to 
$[j,t]$ can be included from $C_j$ because $[i,s] < [j,t]$. No state less than or equal to
$[j,t-1]$ can be included from $C_j$ because $[j,t-1] < [k,u]$.\\
Note that $\omega_3$ only requires $i \neq j \wedge j \neq k$. It is possible that $i = k$; the proof still holds.
\end{proof}

\remove{
\begin{proof}
Sketch here.
We show the contrapositive. If $\omega_1$ is violated, then there exists an initial state 
$t$ such that there exists a state $s$ different from $t$ which is less than  $t$.
Then, $s$ is less than all states in the process containing $t$. Therefore, the
antichain $\{t\}$ cannot be extended to a width-antichain. The proof for $\omega_2$ is dual.
If $\omega_3$ is violated, then there exists $[i,s]$, $[j,t]$ and $[k,u]$, where $i \neq j \wedge j \neq k$, such that
$[i,s] < [j,t]$ and $[j,t-1] < [k,u]$ but $[i,s] \not < [k,u]$.
Note that $\omega_3$ only requires $i \neq j \wedge j \neq k$. It is possible that $i = k$; the proof still holds. 
We now do a case analysis on the relationship between 
$[i,s]$ and $[k,u]$.\\ 
Case 1: $[k,u] < [i,s]$. Depicted in Fig.~\ref{fig:case12}.
In this case we claim that there is no width-antichain
that contains $[i,s]$. Since $[i,s] < [j,t]$, for any state $w$ on process $C_j$ that is concurrent
with $[i,s]$, we get $w \leq [j,t-1]$. Since $[j,t-1] < [k,u]$ none of the states on process $C_k$ greater than
$[k,u]$ are eligible to be in the width-antichain with $w$. Furthermore, all states
less than or equal to $[k,u]$ are not eligible because $[k,u] < [i,s]$. \\
Case 2: $[k,u]$ is incomparable with $[i,s]$. In this case we claim that there is no width-antichain
 that includes both $[k,u]$ and $[i,s]$. No state greater than or equal to 
$[j,t]$ can be included from $C_j$ because $[i,s] < [j,t]$. No state less than or equal to
$[j,t-1]$ can be included from $C_j$ because $[j,t-1] < [k,u]$.
\end{proof}
}

With Theorems~\ref{t:subcut} and~\ref{thm:suff}, we have established that 
conditions $\omega_1$, $\omega_2$ and $\omega_3$
are necessary and sufficient for a poset to 
be width-extensible. 
We now show  that width-extensibility 
is a sufficient condition for modeling a  
concurrent computation under the 
state based model. 
First, we outline how to generate an event based 
model of a concurrent computation from $(S, <)$. Let $\tau$ be any chain partition of $(S, <)$.  
We construct an event based model $(E', \ra, \pi')$ of a concurrent computation by applying the $SE$ transform
(a reverse transform to $ES$) whose steps 
 are shown in Algorithm~\ref{alg:se}.
\begin{algorithm}[!ht]
\remove{
\begin{algorithmic}[1]
\Require Event Based Model $\hat{E} = (E,\ra,\pi)$
\Ensure State Based Model $\hat{S} = (S,<,\tau)$
\State $S_i \gets \{\}$
\For{$i=1$ to $n$}
 \For{$k=0$ to $n_i$}
   \State Add $[i,k]$ to $S_i$
 \EndFor
 \For{$k=0$ to $n_i-1$}
   \State Define $[i,k] < [i,k+1]$ in $S_i$
   \LineComment{$S_i$ is now $(|E_i| + 1)$-element chain}
 \EndFor
\EndFor
 
\State $\hat{S} \gets \bigcup_{i=1}^{n}  S_i$
 \For{$i=1$ to $n$}
  \For{$j=1$ to $n ~\wedge~ j \neq i$}
  \If{$(i,r+1) \ra (j,s)$ in $E$}
   \State Define $[i,r] < [j,s]$ in $\hat{S}$
  \EndIf
\EndFor
\EndFor
\end{algorithmic}
\caption{Algorithm~$1$: $ES$ Transform}
\label{alg:alg-1}
} 
\begin{algorithmic}[1]
\Require State Based Model $\hat{S} = (S,<,\tau)$
\Ensure Event Based Model $\hat{E} = (E',\ra,\pi)$

\State $E'_i \gets \{\}$
\For{$i=1$ to $n$}
 \For{$k=1$ to $n_i$}
   \State Add $(i,k)$ to $E'_i$
 \EndFor
 \For{$k=0$ to $n_i-1$}
   \State Define $(i,k) \ra (i,k+1)$ in $E'_i$
   \LineComment{$E'_i$ is now $(|S_i| - 1)$-element chain}
 \EndFor
\EndFor
 
\State $E'_{temp} \gets \bigcup_{i=1}^{n}  E'_i$
\For{$i=1$ to $n$}
  \For{$j=1$ to $n ~\wedge~ j \neq i$}
  \If{$[i,r-1] < [j,s]$ in $S$}
   \State Define $(i,r) \ra (j,s)$ in $E'_{temp}$
  \EndIf
\EndFor
\EndFor

\State $E' \gets E'_{temp}$ 

\ForAll{$C_s$ in SCC-Decomposition of $E'_temp$} 
  \If{each node is $C_s$ lies on diff. chains} 
      \State Replace $C_s$ with one element $e$ in $E'$
      \State Assign all labels of nodes in $C_s$ to $e$
  \Else
      \State Report $S$ as {\bf not} width-extensible
  \EndIf
\EndFor
\end{algorithmic}
\caption{$SE$ (State to Event) Transform}
\label{alg:se}
\end{algorithm}

\remove{

\begin{algorithm}[]
\DontPrintSemicolon
\SetKwFunction{Diffchain}{AllOnDifferentChain}
\SetCommentSty{small}
\SetProcNameSty{textbf}
\KwIn{State Based Model $\hat{S} = (S,<,\pi)$}
\KwOut{Event Based Model $\hat{E} = (E,\ra,\pi')$ OR 
 Report $\hat{S}$ invalid}
\BlankLine
\For{$i=1 \ldots n$}{
 $E'_i = \{\}$\;
 \For{$k=1 \ldots n_i$}{
   Add $(i,k)$ to $E'_i$\; 
 }
 \For{$k=1 \ldots n_i-1$}{
   Define $(i,k) \ra (i,k+1)$ in $E'_i$\; 
 }
}
\BlankLine
\tcc*[l]{
 $E'_i$ is now $(|S_i| - 1)$-element chain }
$E'_{temp} = \cup  (E'_i ~ | ~ i = 1, 2 \ldots n)$\;
\BlankLine
\For{$i=1 \ldots n$}{
 \For{$j=1 \ldots n$ ~\&\&~ $j \neq i$}{
  \If{$[i,r-1] < [j,s]$ in $S$}{
   Define $(i,r) \ra (j,s)$ in $E'_{temp}$\; 
   }
}
}
\BlankLine
$E' = E'_{temp}$\;
\caption{$SE$ Transform}
\label{alg:se}
\end{algorithm}
\begin{algorithm}[]
\DontPrintSemicolon
\SetKwFunction{Diffchain}{AllOnDifferentChain}
\SetCommentSty{small}
\SetProcNameSty{textbf}
\ForEach{SCC $C_s$ in SCC Decomposition of $E'_{temp}$}{
   {\bf if}~each node in $C_s$ is on a different chain\;
\Indp
      Replace $C_s$ with one element $e$ in $E'$\;
      Assign all labels of nodes in $C_s$ to $e$ in $E'$ \;
      \Indm
   {\bf else}\;
\Indp
      Report $S$ as {\bf not} width-extensible\;
      \Indm
   
}
\remove{
\BlankLine
{\bf routine} \Diffchain$(SCC~~ C_s)$\{\;
\Indp
\remove{
   $k$ = \# of nodes in $C_s$\; 
   \ForEach{$e \in C_s$}{
      \If{$\pi(e) \not\in \mathcal{A}$}{
         $\mathcal{A} = \mathcal{A}  + \pi(e)$\;
      }
   }
}
   {\bf if}~each node in $C_s$ is on a different chain\;
\Indp
      return {\bf true}\;
\Indm
   {\bf else}\;
\Indp
      return {\bf false}\;
\Indm
}
\Indm
\}\;   
\caption{Auxilliary procedures for detecting cycles}
\label{alg:se}
}

In the algorithm, lines $1-11$ perform a reversal of steps of $ES$ 
transform. Lines $13-18$ try to collapse events 
that are `shared' between processes by performing a 
strongly connected component (SCC) decomposition, 
and using the SCCs for identifying shared events. 
If an SCC has events from the same process, then 
that results in a same process cycle --- an 
invalid event based computation. If we represent $\hat{S}$ as a directed graph with $m = |S|$ vertices 
and $d$ directed edges, then the complexity of $SE$ transform is $\mathcal{O}(m + d)$, i.e. linear in size of the graph. 

See Appendix~\ref{sec:app_ills} for some illustrations 
of $SE$ transform's application to examples discussed in 
this paper. 
\remove{
Fig.~\ref{fig:etemp} shows the $E'_{temp}$ (and 
not the final $E'$) generated
during the execution when $SE$ transform is applied 
to $\hat{S}$ given by Fig.~\ref{fig:barrier-koh-state}. After the SCC decomposition based `collapsing' on this $E'_{temp}$, the generated $E'$ is same 
as \figref{fig:barrier-koh-event}.  
Recall that we claimed 
invalidity of a state based model poset shown in
Fig.~\ref{fig:invalidB} 
claiming that such a state model would 
cause cycles when converted to an event based model. 
Let us assign state labels to the 
states shown in that figure: 
$a=[1,0], b=[1,1], c=[1,2], d=[2,0], 
e=[2,1]$.
Now apply the
$SE$ transform of Alg.~\ref{alg:se}
to this poset on states. The resulting $(E,\ra)$ 
would be the one shown in ~\figref{fig:eventcycle}. 
Such a cycle can not
exist in a valid event based model.
\tikzstyle{blackCirc}=[circle,draw=black,fill=black,thick,inner sep=0pt,minimum size=2mm]
\begin{figure}
\centering
\begin{subfigure}[b]{0.45\textwidth}
\centering
\begin{tikzpicture}[scale=0.65]
\node at ( 0,2) [blackCirc] [label={above:$(1,1)$}] (a) {};
\node at (2.5,2) [blackCirc] [label={above:$(1,2)$}] (b) {};
\node at (2.5,0) [blackCirc] [label={below:$(2,2)$}] (b1) {};
\node at (5.5,2) [blackCirc] [label={above:$(1,3)$}] (c) {};
\node at (0,0) [blackCirc] [label={below:$(2,1)$}] (e) {};
\node at (6.2,0) [blackCirc] [label={below:$(2,3)$}] (g) {};
\node at (7.2, 2) [] (end1) {};
\node at (7.2, 0) [] (end2) {};
\draw [thick,->] (a.east) -- (b.west);
\draw [thick,->] (b.east) -- (c.west);
\draw [thick,->] (b.east) -- (g.west);
\draw [thick,->] (b1.east) -- (g.west);
\draw [thick,->] (e.east) -- (b1.west);
 \draw [thick,->]  (b) edge[out=260,in=110,->] (b1);
 \draw  [thick,->] (b1) edge[out=60,in=290,->] (b);
\end{tikzpicture}
\caption{$E'_{temp}$ for $\hat{S}$ of ~\figref{fig:barrier-run-state}}
\label{fig:etemp}
\end{subfigure}
\begin{subfigure}[b]{0.45\textwidth}
\centering
\begin{tikzpicture}[scale=0.65]
\node at (0,2) [blackCirc] [label={above:$(1,1)$}] (a) {};
\node at (3,2) [blackCirc] [label={above:$(1,2)$}] (b) {};
\node at (1.5,0) [blackCirc] [label={below:$(2,1)$}] (e) {};
\draw [thick,->] (a.east) -- (b.west);
\draw [thick,->] (b) -- (e);
\draw [thick,->] (e) -- (a);
\end{tikzpicture}
\caption{Event model generated from states of Fig.~\ref{fig:invalidB}}
\label{fig:eventcycle}
\end{subfigure}
\caption{$SE$ transform applied to earlier examples}
\end{figure}
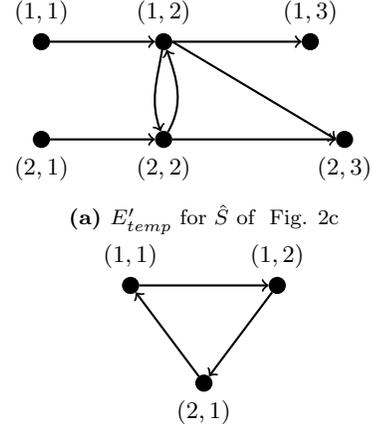
}

The next theorem  shows that width-extensibility is sufficient 
for modeling concurrent computations under the state
based model. 
\begin{thm}
\label{thm:se_valid}
Let $(S, <)$ be any width-extensible poset. Then, there exists
a concurrent computation for which it is the state based model.
\end{thm}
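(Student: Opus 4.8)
The plan is to exhibit the required computation by running the $SE$ transform on $(S,<)$ and then verifying it behaves. First I would fix a chain partition $\tau=(C_1,\dots,C_w)$ of $(S,<)$ whose size equals the width $w$ of the poset (such a partition exists by Dilworth's theorem); then $\hat{S}=(S,<,\tau)$ is a state based model in the sense of Definition~\ref{def:sbcc}, and by Theorem~\ref{thm:suff} it satisfies $\omega_1$, $\omega_2$ and $\omega_3$. Running the $SE$ transform of Algorithm~\ref{alg:se} on $\hat{S}$ builds the auxiliary digraph $E'_{temp}$ on the nodes $\{(i,k)\mid 1\le i\le w,\ 1\le k\le n_i\}$ with $n_i=|S_i|-1$, whose edges are the chain edges $(i,k)\ra(i,k+1)$ together with the cross edges $(i,r)\ra(j,s)$ for $i\neq j$ with $[i,r-1]<[j,s]$ in $S$; it then collapses each strongly connected component to obtain $\hat{E}=(E',\ra,\pi')$. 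I must show three things: (a) the collapse step never reports failure, (b) $\hat{E}$ is a valid event based model, and (c) applying the $ES$ transform to $\hat{E}$ returns $\hat{S}$.

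Everything rests on one lemma: \emph{if there is a directed path of length at least one from $(a,p)$ to $(b,q)$ in $E'_{temp}$, then $[a,p-1]<[b,q]$ in $S$.} I would prove it by induction on path length, removing the last edge. If the last edge is a chain edge $(e,f)\ra(e,f+1)$ (so $b=e$, $q=f+1$), the inductive hypothesis gives $[a,p-1]<[e,f]<[b,q]$. If the last edge is a cross edge $(e,f)\ra(b,q)$ with $e\neq b$, so that $[e,f-1]<[b,q]$ by construction: when $e\neq a$, condition $\omega_3$ applied to $[a,p-1]<[e,f]$ and $[e,f-1]<[b,q]$ yields $[a,p-1]<[b,q]$; when $e=a$, the inductive hypothesis $[a,p-1]<[a,f]$ forces $p-1<f$, hence $[a,p-1]\le[a,f-1]<[b,q]$. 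The base cases (a lone chain edge or a lone cross edge) are immediate. I expect this lemma to be the one substantive step --- the point being that $\omega_3$ is precisely the hypothesis needed to bridge the off-by-one gap between a state $[e,f]$ and its predecessor $[e,f-1]$ when concatenating causal relations; the rest of the argument is bookkeeping.

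Given the lemma, (a) follows at once: an SCC meets every chain in an interval, so if some SCC held two distinct chain-$i$ nodes it would contain $(i,r)$ and $(i,r+1)$, hence a directed path from $(i,r+1)$ back to $(i,r)$, and the lemma would give $[i,r]<[i,r]$, contradicting irreflexivity. For (b): collapsing the strongly connected components of a digraph always produces a DAG, so $\ra$ generates a strict partial order on $E'$; and since no two chain-$i$ nodes are merged, the surviving chain edges make each $E'_i$ totally ordered, so $\hat{E}$ satisfies Definition~\ref{def:event_model}.

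Finally (c): the $SE$ transform was designed as the formal inverse of $ES$, so applying $ES$ to $\hat{E}$ rebuilds for each $i$ a state chain of $|E'_i|+1=n_i+1=|S_i|$ states, recovering the carrier and the partition $\tau$. The $ES$ transform declares $[i,r]<[j,s]$ (for $i\neq j$) iff $(i,r+1)\ra(j,s)$ in $\hat{E}$, which --- since SCC-collapse preserves reachability and $\ra$ is transitive --- holds iff there is a directed path from $(i,r+1)$ to $(j,s)$ in $E'_{temp}$; by the construction such a path is present whenever $[i,r]<[j,s]$ in $S$, and conversely the lifting lemma shows that any such path forces $[i,r]<[j,s]$ in $S$. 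Hence the recovered cross-chain order is exactly the original order, the same-chain order is the chain order in both cases, so $ES(\hat{E})=(S,<,\tau)=\hat{S}$, and $(S,<)$ is the state based model of the concurrent computation whose event based model is $\hat{E}$.
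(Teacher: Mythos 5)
Your proof is correct, and it reaches the same destination as the paper's --- both arguments run the $SE$ transform on a width chain partition of $(S,<)$ and both ultimately lean on $\omega_3$ (obtained from Theorem~\ref{thm:suff}) to bridge the off-by-one gap when composing causal edges --- but the verification is organized quite differently. The paper defines $\ra$ on $E'$ directly and checks by case analysis that this edge relation is already irreflexive and transitive, treating the two-cycle case $i=k,\ r=t$ with an informal appeal to ``shared events,'' and it simply asserts that $ES$ applied to the result returns $(S,<,\tau)$. You instead prove a single path-lifting lemma on the pre-collapse digraph $E'_{temp}$ (a directed path from $(a,p)$ to $(b,q)$ forces $[a,p-1]<[b,q]$ in $S$), by induction on path length; this one lemma then yields (a) that no SCC contains two nodes of the same chain, so the collapse step of Algorithm~\ref{alg:se} never reports failure, (b) that the condensation is a DAG and hence a legitimate event based model, and (c) the round-trip identity $ES(SE(\hat{S}))=\hat{S}$. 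What your route buys is rigor exactly where the paper is thinnest: the shared-event case is handled uniformly by the SCC machinery rather than ad hoc, arbitrary-length causal chains are controlled rather than only two-step compositions, and the ``easily verified'' inverse claim is actually verified. One small caveat in step (c): reachability in $E'_{temp}$ does not literally coincide with $\ra$ in the collapsed $\hat{E}$ for two labels $(i,r+1)$ and $(j,s)$ that are merged into the same shared event (irreflexivity of $\ra$ makes the relation false while a path exists); recovering $[i,r]<[j,s]$ there requires reading the $ES$ transform so that two labels of one shared event are related, which is what the paper's barrier example (Fig.~\ref{fig:barrier-koh-state}) clearly intends but its formal statement of the transform leaves ambiguous. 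That is an imprecision inherited from the paper rather than a gap in your argument.
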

\begin{proof}
We show that there exists a concurrent computation in the event based model such that
when we convert that event based computation to state based model, we get
the poset $(S,<)$.

We first create a width chain partition $\tau$ of $(S, <)$ to get $(S,<,\tau)$.
We then generate an event based model $\hat{E'} = (E', \ra, \pi')$ from $(S,<,\tau)$ using $SE$ transform.
It can be easily verified that applying the $ES$ transform to $(E', \ra, \pi')$ leads to
 $(S, <, \tau)$.
It suffices to show that $(E', \ra)$ is a partial order. \\
{\em Irreflexivity}: Assume, $(i,r) \ra (i,r)$ in $E'(\pi')$. This would require
$r < r$ in $\hat{S}$ --- a contradiction. \\
{\em Transitivity}: Consider $(i,r) \ra (j,s) \wedge  (j,s) \ra (k,t)$, in $E'$. First, let us look at the 
case where $i \neq j \wedge j \neq k$. 
$(i,r)\ra(j,s)$ in the event based model is possible only if
$[i, r-1] < [j, s]$ in $\hat{S}$. Similarly, we also get $[j,s-1] < [k,t]$. Hence:
\[
[i, r-1] < [j, s] \wedge [j,s-1] < [k,t]
\]
By using $\omega_3$ on $\hat{S}$  we get
$[i,r-1] < [k,t]$ in $\hat{S}$
$\equiv (i,r) \ra (k,t)$ in $E'$.\\
When $i = j = k$, the transitivity of states the same chain 
is trivial. Now let us consider the case when $i = j \wedge j \neq k$. Then, $(i,r) \ra (j,s) \wedge  (j,s) \ra (k,t)$ in
$E'$
requires $r < s$, as $i = j$, and $[j,s-1] < [k,t]$ in $\hat{S}$. Observe that  $i = j$ and $r < s$ means that $r-1$, $s-1$, $s$ form a totally ordered set, such that
$r-1 \leq s -1$. Hence, we get $[i, r-1] \leq [j,s-1] \wedge [j,s-1] < [k,t]$. By transitivity of $<$ in $\hat{S}$, this leads to $[i,r-1] < [k,t]$ which is the
desired condition for $(i,r) \ra (k,t)$ in $E'$. The proof for the case of $i \neq j, j = k$ is similar.
Finally, consider the case when $i = k, i \neq j \wedge r = t$. In such a case, the original condition in the $E'$
becomes $(i,r)\ra(j,s)\wedge(j,s)\ra(i,r)$. Given that 
we have $i\neq j$, the condition is only 
possible if $(i,r)$ and $(j,s)$ represent the same 
shared event --- shared between processes/chains $i$ and $j$. 
Now that $(i,r)$ and $(j,s)$ represent the same shared
event, the requirement of transitivity on this 
event is trivially held.  
\end{proof}

\remove{
\begin{proof}
Sketch here.
We show a concurrent computation in the event based model such that
when we convert that event based computation to state based model, we get
the poset $(S,<)$.
We first create a width chain partition $\tau$ of $(S, <)$ to get $(S,<,\tau)$.

\remove{
define the elements of $E'_i$ as:
$E'_i :(i, 1)\ra \ldots \ra(i, n_i -1)\ra(i, n_i)$.
Create a happened-before relation between
these elements (events) of $E'$ as follows:
$(i, r) \ra (j, s)$ in $E' (i, j = 1, 2,\ldots n;
1\leq r\leq n_i , 1 \leq s \leq n_j)$ {\bf if}:\\
$r < s, $ and $i = j$ in $\hat{S}$\\
or\\
$[i, r-1]$ and $[j, s]$ \\are defined in $\hat{S}$ and $[i, r-1] < [j, s]$ in $\hat{S}$, if $i \neq j$.
}
We then generate an event based model $\hat{E'} = (E', \ra, \pi')$ from $(S,<,\tau)$ using $SE$ transformation.
It can be easily verified that applying the $ES$ transformation of Alg.~$1$ (\figref{alg:alg-1})
 to $(E', \ra, \pi')$ leads to
 $(S,<,\tau)$.
It suffices to show that $(E', \ra)$ is a partial order. \\
{\em Irreflexivity}: Assume, $(i,r) \ra (i,r)$ in $E'(\pi')$. This would require
$r < r$ in $\hat{S}$ - a contradiction. \\
{\em Transitivity}: Consider  $(i,r) \ra (j,s) \wedge  (j,s) \ra (k,t)$ where $i \neq j \wedge j \neq k$. The first relation in the event based model is possible only if
$[i, r-1] < [j, s]$ in $\hat{S}$. Similarly, we also get $[j,s-1] < [k,t]$. Hence:
\[
[i, r-1] < [j, s] \wedge [j,s-1] < [k,t]
\]
By using $\omega_3$ on $\hat{S}$  we get
$
[i,r-1] < [k,t]
\equiv (i,r) \ra (k,t)
$
in $E'$.\\
When $i = j = k$, the transitivity of states on the one chain can be shown trivially. Now let us consider the case when $i = j \wedge j \neq k$. Then, $(i,r) \ra (j,s) \wedge  (j,s) \ra (k,t)$ in
$E'$
requires $r < s$, as $i = j$, and $[j,s-1] < [k,t]$ in $\hat{S}$. Observe that  $i = j$ and $r < s$ means that $r-1$, $s-1$, $s$ form a totally ordered set, such that
$r-1 \leq s -1$. Hence, we get $[i, r-1] \leq [j,s-1] \wedge [j,s-1] < [k,t]$. By transitivity of $<$ in $\hat{S}$, this leads to $[i,r-1] < [k,t]$ which is the
desired condition for $(i,r) \ra (k,t)$ in $E'$. The proof for the case of $i \neq j, j = k$ is similar.\\
\remove{
Finally, consider the case when $i = k, i \neq j \wedge r = t$. In this case, the left hand side of  $(i,r) \ra (j,s) \wedge  (j,s) \ra (k,t)$ is equivalent to
$[i,r-1] < [j,s] \wedge [j,s-1] < [i,r]$ as $i = k, r =t$. However note that $(S4)$ prohibits this case. Hence, the left hand side is false; and thus the condition is trivially true.
}

\end{proof}
}

The following lemma combines the results established earlier to show that $ES$ and $SE$ 
transforms are inverse functions of each other. 
\begin{lemma}
Let $\hat{E}=(E,\ra,\pi)$ be an event based model for some computation and let $\hat{S}$ 
be the result of applying $ES$ transform to $\hat{E}$. Then, applying $SE$ transform on $\hat{S}$
results in $\hat{E}$. 
\label{lem:combine1}
\end{lemma}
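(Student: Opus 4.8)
Write $\hat S=(S,<,\tau)$ for the output of $ES$ on $\hat E$. The plan is to run $SE$ on $\hat S$ and verify, by unwinding the two definitions, that every ingredient of $\hat E$ is recovered. First I would settle that $SE$ does not abort: by Lemma~\ref{lem:poset}, Lemma~\ref{lem:state} and Theorem~\ref{t:subcut}, $\hat S$ is a width-extensible poset obeying $\omega_1,\omega_2,\omega_3$, so I claim the strongly-connected-component test in $SE$ never reaches its ``report not width-extensible'' branch; hence $SE$ produces an event model $\hat E'=(E',\ra,\pi')$, and it remains to show $\hat E'=\hat E$.

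Next I would do the bookkeeping on labels and intra-chain order. On chain $i$, $ES$ creates the $(n_i+1)$-element state chain $[i,0]<\dots<[i,n_i]$, and $SE$ then creates from it the $(|S_i|-1)=n_i$-element event chain $(i,1)\ra\dots\ra(i,n_i)$ with $(i,k)\ra(i,k+1)$; so the intermediate object $E'_{temp}$ already carries precisely the labels of $\hat E$ (one copy per chain of every shared event), with the order inside each chain reproduced verbatim. For the cross-edges, $SE$ puts $(i,r)\ra(j,s)$ (for $i\neq j$) into $E'_{temp}$ exactly when $[i,r-1]<[j,s]$ holds in $\hat S$, which by the defining rule of $ES$ means exactly that $(i,r)$ and $(j,s)$ are both present in $E$ and ordered there (or coincide, as the two labels of a single shared event). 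Since $1\le r\le n_i$ guarantees $(i,r)\in E$, this exhibits $E'_{temp}$ as nothing but $E$ with the labels of each shared event \emph{split apart}, carrying every happened-before edge of $E$ transcribed between the corresponding labels.

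The one genuinely nontrivial step is the SCC collapse, and I would argue it as follows. A non-shared event is its own SCC, because a nontrivial cycle among distinct event-labels of $E'_{temp}$ would transcribe into a cycle of the happened-before order of $E$, contradicting that $\hat E$ is a poset. For an event shared by chains $i$ and $j$ (labels $(i,p)$ and $(j,q)$), the transcription contributes both $(i,p)\ra(j,q)$ and $(j,q)\ra(i,p)$ to $E'_{temp}$, so these two labels form a single SCC; and since a chain lists each event only once, its vertices automatically lie on distinct chains (which is the concrete reason the abort branch is never taken). Hence the SCCs of $E'_{temp}$ are exactly the label-sets that named a common event of $\hat E$: collapsing each SCC to one vertex and unioning its labels reinstates precisely the identifications present in $\hat E$, so $E'=E$ and $\pi'=\pi$, while the resulting edge relation --- the transitive closure of the transcription above, with the collapsed self-loops discarded --- coincides with $\ra$ on $E$. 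Therefore $\hat E'=\hat E$.

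I expect the SCC analysis to be the main obstacle: one must rule out spurious cycles and pin down exactly which cycles do occur (the short cycles between the split copies of a shared event) and that those copies sit on distinct chains. This is where the acyclicity of $\hat E$ --- equivalently the width-extensibility of $\hat S$ proved above --- carries the argument; everything else is routine unwinding of the $ES$ and $SE$ rules.
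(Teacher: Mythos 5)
Your proof is correct, but it takes a genuinely different route from the paper's. The paper disposes of this lemma in one line, asserting that it ``follows directly'' from Lemmas~\ref{lem:poset}, \ref{lem:bij}, \ref{lem:state} and Theorems~\ref{t:subcut}, \ref{thm:suff}, \ref{thm:se_valid} --- i.e., it leans on the already-established structural facts (that $\hat{S}$ is a width-extensible poset, that consistent cuts are in bijection, and that the reverse round trip $S \to E \to S$ is the identity) and leaves the actual verification implicit. You instead unwind both transforms directly: you check that the labels and intra-chain orders are transcribed verbatim, that the cross-edge rule of $SE$ inverts the cross-edge rule of $ES$, and --- the part the paper never addresses explicitly --- that the SCCs of $E'_{temp}$ are precisely the label-sets of the events of $\hat{E}$, so the collapse step reinstates exactly the original identifications of shared events and the abort branch is never reached. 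Your SCC analysis is the right way to see this (one should note that a cycle in $E'_{temp}$ projects, after identifying labels of a common event, to a closed walk in $E$ whose steps are each either $\ra$ or equality, so either it yields a genuine $\ra$-cycle, which is impossible, or all its labels name one event); the only looseness is that you phrase the no-spurious-cycles claim for ``distinct event-labels'' where the argument really runs at the level of distinct underlying events, but this is cosmetic. Your version buys a self-contained, checkable proof and makes visible exactly where acyclicity of $\hat{E}$ is used; the paper's version buys brevity at the cost of leaving the SCC step --- arguably the only nontrivial one --- unexamined.
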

\begin{proof}
Follows directly from lemmas~\ref{lem:poset},~\ref{lem:bij}, and~\ref{lem:state} combined 
with theorems~\ref{t:subcut},~\ref{thm:suff}, and~\ref{thm:se_valid}. 
\end{proof}

Thus, we have established that $\omega_1$, $\omega_2$, and $\omega_3$ properties provide a complete 
characterization of a state based model for a 
concurrent computation. In the next section, 
we discuss asynchronous  
computations, and show that their state based
models are a special case 
of models of concurrent computations formalized
in this section. 

\section{Characteristics of State based Models
of Asynchronous\\ Concurrent Computations}
\label{sec:adc}
Asynchronous concurrent
computations, which are common in 
distributed systems, 
are a special type the concurrent computations
that cannot 
have any `shared' events. Shared events are only 
possible when the communication between 
processes is synchronous. Thus, 
the event based model of asynchronous 
computations is defined based on 
a chain partition $\pi$ in which 
all chains are disjoint.  
The event based model of asynchronous concurrent computations(we use the short-form notation ASC from here on) is given by the following definition:  
\begin{definition}[Event based model of ASC]
\label{def:adc_event}
An event based model of an ASC on $n$ processes is 
 is a tuple $(E, \ra, \pi)$ where
$E$ is the set of events, $\ra$ is the happened-before  relation on $E$,
and $\pi$ is a map from $E$ to $\{1..n\}$ such that 
for all distinct events $e,f \in E:~ \pi(e) = \pi(f) \Ra (e \ra f) \vee (f \ra e)$. 
\end{definition}
Thus, $\pi$ partitions $E$ such that every block of the partition is  totally
ordered under $\ra$.

Such an event based model, with no `shared' events, leads to a state based model 
that satisfies stronger properties than those 
satisfied
by the state based model of the previous section.
Intuitively, 
given that the communication 
between processes is asynchronous, no two processes 
can make a `jump' together from their individual 
states to next states as if there was a `shared' 
execution. Hence, the poset $(S,<)$ exhibits
a property that we call `interleaving-consistency'.

\begin{definition}[Interleaving-consistent Poset]
A poset $(X, <)$ is interleaving-consistent if for every
width-antichain $W$ that is not equal to the 
biggest width-antichain, there exists a width-antichain $W' > W$ such that
$|W \cap W'|  = |W|-1$.
\end{definition}

Let $\mathcal{A}(X)$ be the set of all width-antichains 
of a poset $(X,<)$. The biggest width-antichain of $(X,<)$ is the width-antichain $A \in \mathcal{A}(X)$ such that $\nexists A' \in \mathcal{A}(X): A < A'$. 
Informally, interleaving-consistency requires that any possible cut (modeled
as a width-antichain) can be advanced on some process to reach another
possible cut. Fig.~\ref{fig:koh_transform}\subref{fig:ex_chains} shows
an ASC under the event based model, and the corresponding 
poset of the 
state based model in Fig.~\ref{fig:koh_transform}\subref{fig:sppi} is 
interleaving-consistent. In contrast, the event based 
computation in \figref{fig:barrier-koh-event} is not an ASC, and thus the resulting
state based model's poset in 
\figref{fig:barrier-koh-state} 
is not interleaving-consistent --- the 
processes make a `jump' together from states 
$[1,1], [2,1]$ to $[1,2], [2,2]$. 

ASCs are a special kind (subset) of 
concurrent computations, and thus
 a partial order modeling states of an ASC must satisfy $\omega_1$, $\omega_2$ and 
$\omega_3$. In addition, it should also be {\em interleaving-consistent}. 
Formally, {\em interleaving-consistency} of $\hat{S}$ 
 is captured by the condition $\psi$ as follows:\\
\hfill\\
\framebox{
\parbox{0.45\textwidth}{
$(\psi)$ for $1\leq i,j \leq n$,~~
if~~$i\neq j$,~~ then~~ $[i,s-1] < [j,t] \Ra \neg~([j,t-1] < [i,s]).$
}
}
\remove{
\[(\psi)~~for~~ 1\leq i,j \leq n,~~
if~~i\neq j,~~ then~~ [i,s-1] < [j,t] \Ra \neg~([j,t-1] < [i,s]).\]
}

 Thus, for an ASC, a poset $(S, <)$ that models its states
 is characterized by  $\omega_1$, $\omega_2$, 
$\omega_3$, and $\psi$. 
The state based model for ASCs is formally defined as:
\begin{definition}[State based model of ASCs]
\label{def:adc_state}
An asynchronous distributed computation on 
$n$ processes is modeled by $\hat{S}=(S, <, \pi)$, where
$S$ is the set of states, $<$ is an irreflexive partial order relation on $S$
such that $(S, <)$ is a width-extensible and interleaving-consistent poset,
and $\pi$ maps every state to a process from $\{1..n\}$ such that
for all $i \in \{1 \ldots n\}, S_i = \{s \in S~|~i \in \pi(s) \}$ is totally ordered under $<$.
\end{definition}

%

The following set of results establish the properties
of state based models of ASCs. 
\begin{lemma}
\label{lem:al4}
Suppose  $\hat{S}=(S, <, \tau)$ is obtained 
by applying $ES$ transform on an ASC's event based 
model $\hat{E}=(E, \ra, \pi)$. Then $\hat{S}$
 satisfies  $\omega_1$, $\omega_2$, $\omega_3$, and
$\psi$. 
 \end{lemma}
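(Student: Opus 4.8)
The plan is to prove Lemma~\ref{lem:al4} by building on the results already established for general concurrent computations. Since an ASC's event based model $\hat{E}=(E,\ra,\pi)$ is, in particular, an event based model of a (general) concurrent computation with the additional restriction that $\pi$ partitions $E$ into \emph{disjoint} chains (no shared events), Lemma~\ref{lem:state} applies directly and tells us that the poset $\hat{S}$ obtained by the $ES$ transform already satisfies $\omega_1$, $\omega_2$, and $\omega_3$. So the entire burden of the proof is to establish the fourth property, $\psi$, and the disjointness of chains is exactly the extra hypothesis we must exploit.

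First I would unfold the definitions. Suppose, for contradiction, that $\psi$ fails for $\hat{S}$: there exist $i\neq j$ and indices $s,t$ with $[i,s-1] < [j,t]$ \emph{and} $[j,t-1] < [i,s]$ in $\hat{S}$. By the construction rules of the $ES$ transform (the three bullet conditions defining ``$<$'' on $S$ for the $i\neq j$ case), $[i,s-1] < [j,t]$ in $S$ means the events $(i,s)$ and $(j,t)$ are both present in $E$ and $(i,s)\ra(j,t)$ in $E$; symmetrically, $[j,t-1] < [i,s]$ in $S$ means $(j,t)$ and $(i,s)$ are both present in $E$ and $(j,t)\ra(i,s)$ in $E$. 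Hence $(i,s)\ra(j,t)\ra(i,s)$ in $E$, which by transitivity gives $(i,s)\ra(i,s)$ — contradicting the irreflexivity (asymmetry) of $\ra$ on $E$. Here is the one place the disjointness matters: because the chain partition is disjoint, the event labelled $(i,s)$ and the event labelled $(j,t)$ with $i\neq j$ are genuinely distinct events, so there is no degenerate ``same shared event'' escape (the escape clause that appeared, e.g., in the transitivity argument of Theorem~\ref{thm:se_valid}); the cycle $(i,s)\ra(j,t)\ra(i,s)$ is a genuine two-event cycle and is therefore impossible in the happened-before poset $(E,\ra)$.

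Putting the two pieces together: $\omega_1,\omega_2,\omega_3$ hold by Lemma~\ref{lem:state}, and $\psi$ holds by the contradiction argument above. Therefore $\hat{S}$ satisfies $\omega_1$, $\omega_2$, $\omega_3$, and $\psi$, as claimed.

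I do not expect any serious obstacle here — the lemma is essentially a bookkeeping step that specializes the earlier general theory to the disjoint-chain case and then adds one short argument for $\psi$. The only subtlety worth stating carefully in the write-up is the appeal to disjointness of the chain partition, which is what distinguishes ASCs from general concurrent computations and is precisely what rules out the ``shared event'' coincidence that would otherwise make $(i,s)=(j,t)$ and collapse the supposed cycle to a non-cycle.
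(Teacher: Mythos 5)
Your proof is correct and follows essentially the same route as the paper's: $\omega_1,\omega_2,\omega_3$ are inherited from the general theory, and $\psi$ is shown by deriving the two-event cycle $(i,s)\ra(j,t)\ra(i,s)$ in $E$ from a violation of $\psi$, contradicting the asymmetry of $\ra$. Your explicit remark that disjointness of the chain partition is what rules out the shared-event coincidence $(i,s)=(j,t)$ is a useful precision that the paper's proof leaves implicit, but it does not change the argument.
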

\begin{proof}
Since ASCs are a subset of concurrent computations, the conditions 
$\omega_1, \omega_2, \omega_3$ continue to be satisfied as shown 
in Theorem~\ref{thm:suff}. 
Suppose $(S,<)$ doesn't satisfy $\psi$ and 
thus we have $[i,s-1] < [j,t] \Ra ([j,t-1] < [i,s])$ in 
$(S,<)$. But this would require $(i,s) \ra (j,t) \wedge 
(j,t) \ra (i,s)$ in $E$, which is a contradiction. 
\end{proof}

\begin{lemma}
\label{lem:al4ic}
Let $\hat{S}=(S,<,\tau)$ be as defined in Lemma~\ref{lem:al4}. 
Then, $(S, <)$ is interleaving-consistent.
\end{lemma}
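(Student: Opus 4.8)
The plan is to show that the state based model $\hat{S}=(S,<,\tau)$ obtained from an ASC is interleaving-consistent by directly constructing, for an arbitrary width-antichain $W$ that is not the biggest width-antichain, a width-antichain $W'$ with $W < W'$ and $|W \cap W'| = |W| - 1$. Since $\hat{S}$ comes from an ASC, each chain $S_i$ corresponds to exactly one process, the width $w$ equals $n$, and every width-antichain has exactly one state from each chain. So I would write $W = \{[i, r_i] \mid 1 \le i \le n\}$. The first step is to use the hypothesis that $W$ is not the biggest width-antichain together with Lemma~\ref{lem:al4} (so $\omega_1,\omega_2,\omega_3,\psi$ hold) to find a process $i$ on which $W$ can be ``advanced'': I claim there exists some $i$ with $r_i < n_i$ such that replacing $[i,r_i]$ by $[i,r_i+1]$ keeps the set an antichain.

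The key steps, in order: (1) Since $W$ is not the biggest width-antichain, there is a width-antichain $B$ with $W < B$, hence some index $i$ with $[i,r_i] < [i, b_i]$ in $S_i$, so in particular $r_i < n_i$. (2) Among all indices where $W$ is ``below'' $B$, pick $i$ to be a one where $[i,r_i+1]$ is minimal in the sense that no edge from $[i,r_i+1]$ forces a conflict; concretely, I would argue that the candidate set $W' = (W \setminus \{[i,r_i]\}) \cup \{[i,r_i+1]\}$ fails to be an antichain only if $[i,r_i+1] < [j,r_j]$ or $[j,r_j] < [i,r_i+1]$ for some $j \ne i$. The downward conflict $[j,r_j] < [i,r_i+1]$ cannot arise: since $[j,r_j] \,||\, [i,r_i]$ in $W$, having $[j,r_j] < [i,r_i+1]$ would, via the construction rules, force $(j,r_j+1) \ra (i, r_i+1)$; I would combine this with the fact that $[i,r_i+1]$ is reachable (using $B$ and the transitivity/$\omega_3$ machinery) to derive $[j, r_j] < [i, r_i]$, contradicting the antichain property of $W$ — actually the cleaner route is: if every advancement creates a downward conflict, trace the conflicts around to build a cycle among the ``advanceable'' processes, contradicting that $\hat{S}$ is a poset (Lemma~\ref{lem:poset}) or contradicting $\psi$. (3) For the upward conflict $[i,r_i+1] < [j,r_j]$: this translates to $(i, r_i+2) \ra (j, r_j)$ or more directly $[i,r_i+1]<[j,r_j]$, but this does not by itself break anything about $W'$ unless it makes $W'$ not an antichain with the other new element — since we change only one coordinate, the only pairs to check involve $[i,r_i+1]$, and an element $[j,r_j]$ with $[i,r_i+1] < [j,r_j]$ genuinely does break the antichain, so I must rule this out too by the same cyclic/$\psi$ argument, choosing $i$ so that $r_i+1$ is as ``early'' as possible along the poset order. (4) Having found a valid $i$, set $W' = (W\setminus\{[i,r_i]\})\cup\{[i,r_i+1]\}$; then $W'$ is an antichain of size $n = w$, hence a width-antichain, $W < W'$ since $[i,r_i] < [i,r_i+1]$ and all other elements agree, and $|W \cap W'| = n - 1 = |W| - 1$.

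The main obstacle I anticipate is step (2)–(3): proving that some single-process advancement is actually conflict-free. The natural argument is by contradiction via a cycle — assume that for every process $i$ for which $[i,r_i+1]$ exists and lies below the witnessing antichain $B$, the advanced element $[i,r_i+1]$ conflicts (above or below) with some other coordinate of $W$; then chase these conflicts to produce a directed cycle in $\hat{S}$, contradicting Lemma~\ref{lem:poset}. Making this chase precise — in particular handling the ``above'' conflicts, where $[i,r_i+1] < [j,r_j]$, and showing they too feed into a cycle rather than escaping it — is the delicate part, and it is exactly here that the $\psi$ condition ($[i,s-1]<[j,t] \Ra \neg([j,t-1]<[i,s])$) must be invoked: $\psi$ is what prevents two processes from being mutually ``blocked'' at adjacent positions, which is the two-cycle case, and an inductive/minimal-counterexample argument extends this to longer cycles. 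An alternative, perhaps cleaner, route is to pick $i$ minimizing the poset-rank of $[i,r_i+1]$ among all advanceable coordinates and show directly that a conflict at this minimal choice contradicts minimality together with $\omega_3$ and $\psi$; I would try this approach first since it localizes the argument and avoids an explicit global cycle chase.
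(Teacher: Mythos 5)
Your plan follows essentially the same route as the paper's proof: assume some non-biggest width-antichain $W$ cannot be advanced on any single process, chase the resulting blocking relations around a cycle of processes, collapse the cycle with $\omega_3$, and kill the two-cycle base case with $\psi$. Two local points in your sketch need fixing, though neither is fatal since you self-correct toward the right mechanism at the end. First, your case analysis in steps (2)--(3) is inverted: the ``upward'' conflict $[i,r_i+1]<[j,r_j]$ is the vacuous one, since $[i,r_i]<[i,r_i+1]$ and transitivity give $[i,r_i]<[j,r_j]$, contradicting that $W$ is an antichain; whereas the ``downward'' conflict $[j,r_j]<[i,r_i+1]$ is exactly the one that can and does arise (it is the whole obstruction), so your initial claim that it ``cannot arise'' is false and the derivation of $[j,r_j]<[i,r_i]$ from it does not go through. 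Second, the blocking relations $[j,r_j]<[i,r_i+1]$ taken around a cycle of processes do \emph{not} compose into a directed cycle in $(S,<)$ --- each relation connects a level-$r$ state on one chain to a level-$(r+1)$ state on another, and chaining them would require $[i,r_i+1]\leq[i,r_i]$ --- so you cannot contradict Lemma~\ref{lem:poset}. The contradiction must come, as you correctly say in your closing paragraph, from telescoping the cycle with $\omega_3$ down to two mutually blocking processes at adjacent positions and then invoking $\psi$; that is precisely the paper's argument.
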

\begin{proof}
 Suppose $(S,<)$ satisfies 
 $\psi$, but is not interleaving-consistent. 
 Hence, there is some antichain $A$ of 
 $(S,<)$ that 
 is not the biggest, and still can not be extended along just one process to form another antichain 
 $A'$. Let $[i,a_i]$ be the element from chain $i$ that belongs to $A$. Our assumption 
 means that $\nexists i: A - \{[i,a_i]\} + \{[i,a_i+1]\}$ is a width-antichain.
Hence $\forall i, \exists j \neq i: [i,a_i] < [j,a_j+1]$. Given that $S$ is finite (we can not
keep on finding a `new' $j$ for every `new' $i$ we consider), we know that to satisfy this 
requirement
there must exist $k, k\neq j \wedge k\neq i$ such that 
$[j,a_j] < [i,a_i+1] \wedge [k,a_k] < [j,a_j+1] \wedge [i,a_i] < [k,a_i+1]$. See Fig.~\ref{fig:ijkal4} in Appendix~
\ref{sec:app_ills}
 for an illustration. 

From the previous lemma, we know that $(S,<)$ 
 satisfies $\omega_1, \omega_2, \omega_3$. 
Applying $\omega_3$ we get $[k,a_k] < [i,a_i+1]$. But this leads to $[i,a_i] < [k,a_i+1]
\wedge
[k,a_k] < [i,a_i+1]$ --- a contradiction with $\psi$. 
\end{proof} 

\begin{thm}
Let  $(S, <)$ be any interleaving-consistent and width-extensible poset. Consider any chain partition $\tau$ of $(S, <)$.
Then, $\hat{S}=(S, <, \tau)$ satisfies  $\omega_1$, $\omega_2$, $\omega_3$, and
$\psi$. 
\end{thm}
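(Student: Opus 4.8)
The plan is to get three of the four conditions for free from the synchronous characterisation already proved, and to establish the remaining condition $\psi$ by a direct lattice‑theoretic argument. Since $(S,<)$ is width‑extensible, Theorem~\ref{thm:suff} applies verbatim to the chain partition $\tau$ and tells us that $\hat S=(S,<,\tau)$ satisfies $\omega_1$, $\omega_2$ and $\omega_3$. I would also record two consequences of $\omega_1$ and $\omega_2$ that the rest of the argument leans on: $\omega_1$ makes the $n$ initial states $\{[k,0]\}$ pairwise concurrent, so the width of $(S,<)$ equals the number $n$ of chains, hence every width‑antichain contains exactly one state from each chain; and dually the biggest width‑antichain is $B=\{[k,n_k]\mid 1\le k\le n\}$. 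So the whole content of the theorem reduces to proving $\psi$.

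For $\psi$ I would argue by contradiction. Suppose $\psi$ fails, so there are chains $i\neq j$ and indices $s,t$ with $[i,s-1]<[j,t]$ and $[j,t-1]<[i,s]$ in $(S,<)$; note $1\le s\le n_i$, so $[i,s-1]$ is a genuine state and $[i,s-1]\neq[i,n_i]$. By width‑extensibility the antichain $\{[i,s-1]\}$ extends to a width‑antichain, so the family of width‑antichains containing $[i,s-1]$ is non‑empty; pick one, $W^{*}$, that is maximal in this family with respect to the order $\le$ on $\mathcal A(S)$ (possible since $\mathcal A(S)$ is finite). Because $[i,s-1]\neq[i,n_i]$ we have $W^{*}\neq B$, so $W^{*}$ is not the biggest width‑antichain, and interleaving‑consistency supplies a width‑antichain $W'>W^{*}$ with $|W^{*}\cap W'|=n-1$. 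Writing $W^{*}=C\cup\{x\}$ and $W'=C\cup\{y\}$ with $x\neq y$, the condition $W^{*}\le W'$ forces $x<y$, and since $x$ and $y$ are each concurrent with all of $C$ they both lie on the single chain not met by $C$; in particular $x$ and $y$ are on the same chain.

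The crux — and the step I expect to be the main obstacle — is to show this common chain is not $C_i$. If it were, then $x=[i,s-1]$ (the chain‑$i$ state of $W^{*}$) and $y=[i,s']$ with $s'\ge s$. Let $[j,\tau_j]$ be the chain‑$j$ state of $W^{*}$; since $[j,\tau_j]$ is concurrent with $[i,s-1]$ and $[i,s-1]<[j,t]$, we must have $\tau_j\le t-1$, whence $[j,\tau_j]\le[j,t-1]<[i,s]\le[i,s']=y$. But $[j,\tau_j]\in C\subseteq W'$ and $y\in W'$, contradicting that $W'$ is an antichain. Hence $x,y$ lie on some chain $C_k$ with $k\neq i$, so $[i,s-1]$ is still the chain‑$i$ state of $W'$; then $W'$ is a width‑antichain containing $[i,s-1]$ with $W'>W^{*}$, contradicting the maximality of $W^{*}$. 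This contradiction establishes $\psi$, and together with Theorem~\ref{thm:suff} completes the proof. Everything besides the chain‑$i$ exclusion (the bound $\tau_j\le t-1$ and the resulting antichain violation) and the preliminary remark that the width equals $n$ is routine bookkeeping.
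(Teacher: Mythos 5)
Your proposal is correct and follows essentially the same route as the paper: $\omega_1$--$\omega_3$ come from Theorem~\ref{thm:suff}, and $\psi$ is proved by contradiction by applying interleaving-consistency to a maximal width-antichain through $[i,s-1]$ and showing the forced one-chain advance is blocked. The only (minor) difference is that the paper additionally pins down the chain-$j$ component of that antichain to be the largest chain-$j$ state incomparable with $[i,s-1]$ and then rules out advances on chain $i$, chain $j$, and elsewhere separately, whereas you constrain only $[i,s-1]$ and dispose of the chain-$i$ case via the bound $\tau_j \le t-1$ on whatever chain-$j$ element the antichain contains --- a slight streamlining of the same argument.
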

\begin{proof}
Width-extensibility guarantees $\omega_1$, $\omega_2$ and $\omega_3$. 
Suppose $\psi$ is not satisfied, i.e., there exist 
$[i,s] , [j,t]$, for $i\neq j$, such that $[i,s-1] < [j,t] \wedge ([j,t-1] < [i,s])$.
Let $[j,r]$ be  the largest state on $S_j$ that is
incomparable with $[i,s-1]$. Note that $r \leq t-1$. It is clear that $[j,r] < [j,t]$ because $[i,s-1] < [j,t]$.
Since $([j,t-1] < [i,s])$, we also get that $[j,r] < [i,s]$.

Let ${\cal W}$ be the set of all width-antichains that include both $[i,s-1]$ and $[j,r]$.
Let $A$ be the biggest antichain in ${\cal W}$.
We claim that there does not exist any width-antichain $A' \geq A$ such that $|A' - A| = 1$, and thus not satisfying
$\psi$ contradicts with interleaving-consistency.
If $A'$ differs from $A$ on a chain different from $i$ and $j$, then it violates that $A$ is the biggest antichain
that contains $[i,s-1]$ and $[j,r]$. Hence, to satisfy interleaving-consistency, $A'$ must differ from $A$ on either $i$ or $j$.
 Suppose $A' - A = [i,s]$ then, because $A'$ is a width-antichain, we get that $[j,r]$ is incomparable with $[i,s]$, a
contradiction. If $A' - A = [j,r+1]$, then we get that $[i,s-1]$ is incomparable with $[j,r+1]$, which contradicts 
the definition of $[j,r]$. 
\end{proof}

\remove{
\begin{proof}
Width-extensibility guarantees $\omega_1$, $\omega_2$ and $\omega_3$. 
Suppose $\psi$ is not satisfied, i.e., there exist 
$[i,s] , [j,t]$, for $i\neq j$, such that $[i,s-1] < [j,t] \wedge ([j,t-1] < [i,s])$.
Let $[j,r]$ be  the largest state on $S_j$ that is
incomparable with $[i,s-1]$. Note that $r \leq t-1$. It is clear that $[j,r] < [j,t]$ because $[i,s-1] < [j,t]$.
Since $([j,t-1] < [i,s])$, we also get that $[j,r] < [i,s]$.

Let ${\cal W}$ be the set of all width-antichains that include both $[i,s-1]$ and $[j,r]$.
Let $A$ be the biggest antichain in ${\cal W}$.
We claim that there does not exists any width-antichain $A' \geq A$ such that $|A' - A| = 1$, and thus not satisfying
$\psi$, contradicts with interleaving-consistency.
If $A'$ differs from $A$ on a chain different from $i$ and $j$, then it violates that $A$ is the biggest antichain
that contains $[i,s-1]$ and $[j,r]$. Hence, to satisfy interleaving-consistency, $A'$ must differ from $A$ on either $i$, or $j$.
 Suppose $A' - A = [i,s]$ then because $A'$ is a width-antichain -- we get that $[j,r]$ is incomparable with $[i,s]$, a
contradiction. If $A' - A = [j,r+1]$, then we get that $[i,s-1]$ is incomparable with $[j,r+1]$, which contradicts 
the definition of $[j,r]$. 
 
\end{proof}
}

\begin{thm}
\label{thm:exists}
Let $(S, <)$ be any poset that is width-extensible and interleaving-consistent. Then, there exists
an ASC for which it is the state-based model.
\end{thm}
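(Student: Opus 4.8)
The plan is to follow the same strategy as in the proof of Theorem~\ref{thm:se_valid}, with one extra step that exploits interleaving-consistency. First I would pick a chain partition $\tau$ of $(S,<)$ of size equal to the width $w$ (exactly as in the proof of Theorem~\ref{thm:se_valid}) and run the $SE$ transform of Algorithm~\ref{alg:se} on $(S,<,\tau)$ to obtain an event based model $\hat{E'}=(E',\ra,\pi')$. Because $(S,<)$ is width-extensible, Theorem~\ref{thm:se_valid} already guarantees that $\hat{E'}$ is a valid event based model of \emph{some} concurrent computation and that applying the $ES$ transform to $\hat{E'}$ recovers $(S,<,\tau)$. Hence the only thing left to prove is that $\hat{E'}$ is in fact the event based model of an ASC in the sense of Definition~\ref{def:adc_event}, i.e. that $\pi'$ partitions $E'$ into \emph{disjoint} chains --- equivalently, that $\hat{E'}$ has no `shared' events. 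Once this holds, $(S,<)$, being width-extensible and interleaving-consistent, satisfies Definition~\ref{def:adc_state}, so it is the state based model of this ASC.

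By the construction in Algorithm~\ref{alg:se}, a shared event in $\hat{E'}$ can arise only from a non-trivial strongly connected component of the intermediate digraph $E'_{temp}$. I would first observe that every such SCC contains at most one vertex from each chain: otherwise it would contain $(i,r)$ and $(i,r+1)$ for some $i,r$, forcing Algorithm~\ref{alg:se} to report $(S,<)$ as not width-extensible, which contradicts Theorem~\ref{thm:se_valid}. Consequently no chain edge $(i,r)\ra(i,r+1)$ can have both endpoints inside an SCC, so any cycle lying inside an SCC consists solely of cross edges $(a,p)\ra(b,q)$, and such an edge arises exactly when $a\neq b$ and $[a,p-1]<[b,q]$ in $S$; in particular all vertices on such a cycle lie on pairwise distinct chains. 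Also, by the theorem immediately preceding this one, $\hat{S}=(S,<,\tau)$ satisfies $\omega_3$ and $\psi$.

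The crux is then to rule out non-trivial SCCs. Suppose one exists, and pick a cycle $(i_1,r_1)\ra(i_2,r_2)\ra\cdots\ra(i_m,r_m)\ra(i_1,r_1)$ inside it, with $m\geq 2$ and $i_1,\ldots,i_m$ pairwise distinct. The cross-edge rule yields in $(S,<)$ the relations $[i_1,r_1-1]<[i_2,r_2]$, $[i_2,r_2-1]<[i_3,r_3]$, $\ldots$, $[i_m,r_m-1]<[i_1,r_1]$. Applying $\omega_3$ to the first two of these (legitimate since $i_1\neq i_2$ and $i_2\neq i_3$) gives $[i_1,r_1-1]<[i_3,r_3]$, which shortens the cyclic list by one relation while keeping all involved chains distinct; iterating this, I reach a two-relation list $[i_1,r_1-1]<[i_m,r_m]$ and $[i_m,r_m-1]<[i_1,r_1]$ with $i_1\neq i_m$ --- a direct violation of $\psi$. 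Therefore every SCC of $E'_{temp}$ is a singleton, $E'=E'_{temp}$, the map $\pi'$ is an honest partition of $E'$ into disjoint chains, and $\hat{E'}$ is an ASC whose $ES$-image is $(S,<,\tau)$; hence $(S,<)$ is the state based model of an ASC.

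The genuinely new difficulty relative to Theorem~\ref{thm:se_valid} is exactly this last argument: ruling out \emph{every} possible shared event, not just the obvious two-process synchronization. The two-process case ($m=2$) is immediate from $\psi$, but a longer hypothetical cycle of cross edges must be handled, and the key point is that $\omega_3$ --- which width-extensibility already supplies --- lets us telescope such a cycle down to the two-process situation where $\psi$ applies.
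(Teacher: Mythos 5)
Your proof is correct, and it turns on the same two facts as the paper's own argument: $\omega_3$ (which width-extensibility supplies) to compose cross-chain dependencies, and $\psi$ (which interleaving-consistency supplies, via the theorem you cite) to forbid the two-chain mutual dependency $[i,r-1]<[j,s]\wedge[j,s-1]<[i,r]$. The organization, however, is genuinely different. The paper simply re-runs the proof of Theorem~\ref{thm:se_valid}, checking that the relation $\ra$ produced by the $SE$ transform is irreflexive and transitive; the only sub-case of transitivity where Theorem~\ref{thm:se_valid} had to invoke a shared event is $i=k$, $i\neq j$, $r=t$, and there $\psi$ makes the hypothesis vacuous, so acyclicity --- and with it the absence of shared events --- follows silently from irreflexivity plus transitivity. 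You instead attack acyclicity of the intermediate digraph $E'_{temp}$ head-on: a cycle inside a nontrivial SCC lies on pairwise distinct chains, $\omega_3$ telescopes its cross-edge relations down to a two-chain mutual dependency, and $\psi$ kills it. What your route buys is an explicit verification of the step the paper leaves implicit, namely that every SCC is a singleton, hence $\pi'$ really partitions $E'$ into disjoint chains and the output satisfies Definition~\ref{def:adc_event} of an ASC rather than merely being some concurrent computation. The one soft spot is your dismissal of SCCs containing two vertices of the same chain by appeal to Theorem~\ref{thm:se_valid}: that theorem asserts the existence of a computation, not literally that Algorithm~\ref{alg:se} never reports failure on a width-extensible input, so strictly you should either extract that fact from the proof of Theorem~\ref{thm:se_valid} or run your telescoping argument on such a cycle as well (absorbing the chain-edge runs, which only strengthen the relations) to reach the same $\psi$ or irreflexivity contradiction. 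This is a presentational wrinkle rather than a gap, and the paper's proof leans on the same implicit fact.
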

\begin{proof}
Let $\tau$ be any chain partition of $(S, <)$.
Apply $SE$ transform on \sposet to generate
an event based model $(E',\ra)$.
It is trivial to verify that applying $ES$ transform 
 to $(E',\ra)$ leads to
 $(S, <)$.
It suffices to show that $(E',\ra)$ is a partial order. \\
{\em Irreflexivity}: can be proved using exactly the 
same argument used in Theorem~\ref{thm:se_valid}.\\
{\em Transitivity}: Except the case of $i = k, i \neq j \wedge r = t$, apply the same argument as in Theorem~\ref{thm:se_valid}.
For $i = k, i \neq j \wedge r = t$, we use a different
argument. In this case, the left hand side of  $(i,r) \ra (j,s) \wedge  (j,s) \ra (k,t)$ is equivalent to
$[i,r-1] < [j,s] \wedge [j,s-1] < [i,r]$ as $i = k, r =t$. But, $\psi$ prohibits this case --- hence the left hand side is false and the constraint holds trivially.
\end{proof}

Similar to Lemma~\ref{lem:combine1}, we can now verify the following result. 
\begin{lemma}
Let $\hat{E}=(E,\ra,\pi)$ be an event based model for some ASC and let $\hat{S}$ 
be the result of applying $ES$ transform to $\hat{E}$. Then, applying $SE$ transform on $\hat{S}$
results in $\hat{E}$. 
\end{lemma}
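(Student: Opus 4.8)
The plan is to mirror the proof of Lemma~\ref{lem:combine1}, but now over the restricted class of posets that model ASCs. First I would note that an ASC is in particular a concurrent computation in the sense of Definition~\ref{def:event_model} (its chain partition $\pi$ happens to be disjoint), so every result of Section~\ref{sec:synch} applies verbatim to $\hat{E}$ and to $\hat{S} = ES(\hat{E})$; in particular Lemma~\ref{lem:al4} and Lemma~\ref{lem:al4ic} tell us that $\hat{S}$, equipped with the chain partition $\tau$ that $ES$ induces from $\pi$, is a width-extensible and interleaving-consistent poset, hence a legitimate input to the $SE$ transform. At the coarsest level the statement is then a corollary of Lemma~\ref{lem:combine1}; what remains worth spelling out is why the $SE$ transform, run on this particular $\hat{S}$, neither collapses any vertices nor reaches its ``report not width-extensible'' branch.

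Next I would track the two phases of Algorithm~\ref{alg:se} explicitly. Lines~$1$--$11$ undo the chain construction of the $ES$ transform: from each $(|S_i|-1)$-element chain they recover the process chains of $E$, and then they install a cross-chain edge $(i,r)\ra(j,s)$ precisely when $[i,r-1] < [j,s]$ holds in $\hat{S}$. By the defining ``iff'' clause of the $ES$ transform, for $i \neq j$ this relation holds in $\hat{S}$ exactly when $(i,r) \ra (j,s)$ holds in $E$. Hence $E'_{temp}$, viewed as a labelled directed graph, is identical to $(E, \ra, \pi)$ — here I would also use that $\ra$ is already transitive, so there is nothing to saturate.

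Then I would handle the SCC-collapsing step, lines~$13$--$18$. Because $\hat{E}$ is a genuine event based model of an ASC, $(E,\ra)$ is a partial order and therefore acyclic; since $E'_{temp}$ coincides with $(E,\ra)$, every strongly connected component of $E'_{temp}$ is a singleton. Consequently each iteration of the loop merely replaces a one-element component by the same event with its (single) label, so the loop acts as the identity, and no component ever contains two vertices on the same chain, so the invalidity branch is unreachable. Therefore $E' = E'_{temp} = E$ and $\pi' = \pi$, which is exactly $\hat{E}$. (Combined with Lemma~\ref{lem:poset}, Lemma~\ref{lem:bij}, Theorem~\ref{thm:exists}, and the preceding characterization theorem, this packages exactly as the one-line citation-style argument used for Lemma~\ref{lem:combine1}.)

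The point that takes a little care — and the place I would be most careful — is the exact match of the cross-chain edges in both directions: that $ES$ introduces no state-order relation $[i,r-1]<[j,s]$ beyond those forced by $(i,r)\ra(j,s)$ in $E$, and that $SE$ reads every one of them back. This is a direct unwinding of the two ``iff'' clauses, but it is exactly where an off-by-one in the index shift ($r$ versus $r-1$, $r+1$ versus $r$) would hide, so I would state the correspondence $(i,r)\ra(j,s) \iff [i,r-1]<[j,s]$ once and appeal to it throughout; everything else is bookkeeping.
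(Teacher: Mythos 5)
Your proposal is correct and follows the same route as the paper: the paper's own proof is exactly the one-line citation of Lemmas~\ref{lem:bij}, \ref{lem:al4}, \ref{lem:al4ic} together with Theorems~\ref{t:subcut} and~\ref{thm:exists}, which you reproduce and then usefully unpack. Your added detail --- the edge-level correspondence $(i,r)\ra(j,s) \iff [i,r-1]<[j,s]$ and the observation that acyclicity of $(E,\ra)$ makes the SCC-collapsing phase a no-op for an ASC --- is consistent with, and more explicit than, what the paper records.
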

\begin{proof}
Follows directly from lemmas~\ref{lem:bij},~\ref{lem:al4}, and~\ref{lem:al4ic} combined 
with theorems~\ref{t:subcut}, and~\ref{thm:exists}. 
\end{proof}

\newcommand{\lwa}{L_{WA}}

\section{Applications}
\label{sec:checkpoint}
To conclude, we now discuss 
two applications of duality between state and event based models of concurrent computations.

\subsection{Predicate Detection} 
Our theory applies to detection of global predicates that depend only on the latest events in ASCs.
For example, consider a set of processes that execute three
kinds of events: internal, message send and {\em blocking receive}. The blocking
receive event blocks the process until it receives a message from some process.
It is clear that in absence of in-transit messages, and the last executed event at all processes
being a receive event, the system has a communication deadlock.
In this example, we require that the last event at each process be a blocking receive.
Even if one process is left out, that process could send messages to all other processes to
unblock them.

Recall that an ideal
$Q$  of a poset $P = (X,\leq)$ is a {\em width-ideal} if 
the set of all maximal elements in $Q$, denoted by $maximal(Q)$, is a 
width-antichain of $P$. Let $B$ be a predicate, and $G$ be a global state of a computation, then $B(G)$ denotes 
that $B$ is true on $G$. A width-predicate is defined as follows. 
\begin{definition}[Width-Predicate]
A global predicate $B$ in a distributed computation on $n$ processes is a
{\em width-predicate} if
$B(G) \Rightarrow |maximal(G)| = n$.
\end{definition}
Some examples of width-predicates are:\\
$1.$ {\em Barrier synchronization}:  ``Every process has made a call
to the method {\sf \small barrier}.''\\
$2.$
{\em Deadlock for Dining Philosophers}: ``Every philosopher has picked up
a fork''.\\
$3.$
{\em Global Availability}: ``Every process has an active session and the total
number of permits with processes is less than $k$.''

Note that $1$ and $2$ are also conjunctive predicates
and can already be detected efficiently. But even if $B$ is not stable or conjunctive, as in example $3$, we can use our theory to detect it.
Clearly, to detect a width-predicate, it is sufficient to construct or traverse the lattice
of the width-ideals.
The following result, based on \cite{Ganter10, Gar03}, gives an idea for an algorithm to construct or traverse the lattice.  
\begin{thm}
Given any finite width-extensible poset $P$, there exists an algorithm to
enumerate all its width-antichains in
$O(n^2L)$ time where $n$ is the width of the poset and $L$ is the size
of the lattice of width-antichains.
\end{thm}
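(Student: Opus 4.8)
The plan is to adapt the standard lattice-enumeration technique of Ganter and of Garg (as cited) to the sublattice $L_{WA}(P)$ of width-antichains, which we already know (from the discussion following Definition~\ref{def:cgs_state}, via Dilworth and Koh) is a distributive lattice. The core idea is a breadth-first (or BFS-style) traversal of the Hasse diagram of $L_{WA}(P)$ starting from the smallest width-antichain — which exists and consists of the $\omega_1$ initial states $\{[i,0] \mid 1 \le i \le n\}$ — and repeatedly applying a ``successor'' operation. Interleaving-consistency is not assumed here (only width-extensibility), so the successors need not be obtained by advancing a single process; instead, for a given width-antichain $W$ we would compute, for each process index $i$, the least width-antichain $W' > W$ whose $i$-th component is strictly larger than that of $W$ (a ``closure''-type operation analogous to computing the least consistent cut extending a given advance). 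Each such $W'$ is a join-irreducible-style cover candidate, and the set of all covers of $W$ in $L_{WA}(P)$ is obtained by taking the minimal elements among these $n$ candidates.

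The key steps, in order, are: (1) establish that $L_{WA}(P)$ has a least element and that from any $W$ that is not the top, advancing along at least one process yields a strictly larger width-antichain (this is exactly where width-extensibility is used, via Theorem~\ref{t:subcut} / the $I_i(s)$-interval machinery — given $W$ not maximal, some chain has room, and the interval-intersection argument produces an extension, from which one extracts a genuinely larger width-antichain); (2) define the successor operator $\mathrm{next}_i(W)$ that, for each $i$, returns the least width-antichain dominating $W$ and strictly above $W$ on chain $i$, and show it is well-defined and computable in $O(n)$ comparisons per evaluation once the poset relation is preconditioned, so $O(n^2)$ work to compute all $n$ candidates from one node; (3) show that the set of covers of $W$ is exactly the set of minimal elements of $\{\mathrm{next}_i(W) \mid i\}$, so each node has at most $n$ children and BFS visits each of the $L = |L_{WA}(P)|$ nodes, processing each in $O(n^2)$ time, giving the claimed $O(n^2 L)$ bound; (4) argue termination and completeness — because $L_{WA}(P)$ is a finite lattice and the cover relation is connected, BFS from the bottom reaches every width-antichain, and a standard visited-set (hashing antichains, each of size $n$) prevents revisiting, with the lattice structure guaranteeing the visited-set check is cheap relative to the $O(n^2)$ per-node cost.

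I expect the main obstacle to be step~(2)–(3): proving that the per-process ``least width-antichain advancing chain $i$'' is well-defined and that minimal elements of these $n$ candidates give precisely the covers. Well-definedness again reduces to width-extensibility (the candidate set of antichains dominating $W$ and exceeding it on chain $i$ is nonempty and closed under meet within $L_{WA}(P)$, since $L_{WA}(P)$ is a distributive lattice, so a least element exists), but one must be careful that advancing chain $i$ by one state may force several other chains to advance simultaneously to restore the antichain/width condition — this is the non-interleaving-consistent case — and that this cascading closure still terminates and lands on a genuine width-antichain rather than merely an antichain of smaller size. The height of the poset bounds the number of cascade rounds, and each round touches $O(n)$ chains with $O(n)$ comparisons, which is absorbed into the $O(n^2)$ per-node budget. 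The remaining bookkeeping — that distinct covers are produced without duplication, and that the total is $O(n^2 L)$ rather than $O(n^2 L \log L)$ — follows from the lattice being ranked and from amortizing the visited-set operations, and is routine given the structure established above.
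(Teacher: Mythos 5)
Your overall strategy---a BFS over the Hasse diagram of the width-antichain lattice with a per-process ``least advance'' operator---is a genuinely different route from the paper's, and it is where the trouble lies. The paper's proof is a two-line reduction: apply the $SE$ transform to $P$ to obtain an event-based poset $P'$, invoke the bijection of Lemma~\ref{lem:bij} between width-antichains of $P$ and downsets (consistent cuts) of $P'$, and then run the known algorithms of Ganter and Garg on $P'$. The entire point of that reduction is that in the \emph{downset} lattice every cover differs from its predecessor by exactly one added element, which is what makes the $O(n^2)$ per-node cost of those algorithms go through. You instead try to enumerate covers directly in the width-antichain lattice, where---precisely because interleaving-consistency is \emph{not} assumed---a cover of $W$ may differ from $W$ on several chains at once, and advancing chain $i$ by one state can trigger a cascade that advances other chains by \emph{many} states each.

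This is the concrete gap: your accounting of the cascade does not support the $O(n^2)$ per-node budget. You bound the number of cascade rounds by the height of the poset and claim each round costs $O(n)\times O(n)$; even taken at face value that gives $O(h\,n^2)$ per candidate and $O(h\,n^3)$ per node, where $h$ is the poset height---a quantity that does not appear in the theorem's bound and cannot be ``absorbed.'' Worse, the number of individual state-advances in one closure computation is bounded only by $|S|$, not by $n$ or $h\cdot n$. In addition, your BFS needs a visited set over antichains of size $n$, whose maintenance cost and space the Ganter/Garg algorithms avoid entirely by generating each lattice element exactly once in lectic order; waving this away as ``routine amortization'' leaves the stated $O(n^2L)$ unproved. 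To repair the argument you would essentially have to re-derive the reduction the paper uses: push the problem through the $SE$ transform so that covers become single-element extensions, at which point the cited algorithms give the bound directly.
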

\begin{proof}
We exploit the bijection between the set of all down-sets of $(E, \ra, \pi)$ and
the set of all width antichains of $(S, <, \tau)$ (Lemma 1). Given the
poset $P$, we apply the $SE$
transform (Algorithm~\ref{alg:se}) to get another poset $P'$ such that enumerating consistent cuts of $P'$ is equivalent to enumerating all
width-antichains
of $P$. We can now use algorithms in \cite{Ganter10, Gar03} on $P'$ to enumerate
all down-sets in $O(n^2L)$ time.
\end{proof}

\subsection{Better Understanding of Checkpointing}
 Checkpointing \cite{bhargava} is widely used for   
fault tolerance in distributed systems. In {\em uncoordinated}
checkpointing \cite{xu:zig}, processes take checkpoints
independently, without any group communication and coordination.
In a distributed computation with $n$ processes, 
let $L_i$ denote the sequence of local checkpoints of process $C_i$.
Note that any checkpoint $lc \in L_i$ is a local state of process $C_i$. Hence, $L_i$ 
is a state chain that is totally ordered under the ``$<$'' relation that we have used for 
comparing states in this paper.  
 It is common to assume that the initial state and the final state
in each process are checkpointed \cite{xu:zig,helary97preventing}.
Let the set of all local checkpoints be $L$, i.e., $L = \bigcup L_i.$
The set of checkpoints $L$, together with the existed-before
relation ``$<$'', forms a state based model $\hat{L} = (L,<,\tau)$,
where $\tau$ partitions $L$ into chains. A subset $G \subseteq L$
is a global checkpoint iff $\forall c,d \in G: c~\|~d$ and
$|G| = n$. Hence, a global checkpoint is equivalent to a consistent global state 
in a state based model over checkpoints of the computation. 
 A local checkpoint is `useless' if it cannot be part of any 
global checkpoint. 
Netzer et al. \cite{xu:zig} established results on useless
checkpoints using the notion of {\em zig-zag} paths.   
Wang \cite{wang:rdt} 
used a construction called
$R$-graph (or, rollback-dependency graph) to devise an algorithm for detection of useless checkpoints. Although, both
\cite{xu:zig,wang:rdt}
 have made important contributions, they do not clearly highlight the fundamental concept that checkpoints 
 are states of a distributed computation, and reasoning 
 about checkpoints is in effect reasoning over 
 the state based model of an ASC. Using the theory established in this paper, one can easily understand the  
 intuition behind constructions of {\em zig-zag} paths and $R$-graphs. 
 In short, by viewing a checkpointing computation 
 as a state based model, the interpretation of {\em zig-zag} paths, useless checkpoints, and $R$-graphs is as follows.
 \begin{itemize}\itemsep0em
 \item Absence of {\em zig-zag} paths between checkpoints (states) in a computation means that the 
 checkpoints can be part of a width-antichain (consistent cut).   
Their presence between checkpoints indicates that the 
 checkpoints cannot be part of a width-antichain.
A useless checkpoint is a state of a computation that 
 cannot belong to any width-antichain of the 
poset under the state based model. 
 \item The $R$-graph construction on a checkpoint computation essentially generates an event based model
 from the state based model that is the original computation. Hence, the algorithm to identify 
 useless checkpoints (in \cite{wang:rdt}) effectively tries to check if the model 
of the computation is legal under the  event based model when a particular checkpoint is included. Thus, it applies
the $SE$ transform (Alg.~\ref{alg:se}) on the state based model imposed by the checkpoints.  
 The $R$-graph construction and detection algorithm (by finding cycles) and the $SE$ transform
have the same computation complexity $\mathcal{O}(k + m)$, where $k$ is the number of checkpoints (states) in the computation
and $m$ is the number of messages. 
 \end{itemize}
\remove{
Thus, even though our work does not better the $R$-graph based algorithm, the concepts presented 
here provide a way of clearly understanding why the $R$-graph based algorithm works.  }

\remove{
A zig-zag path between two checkpoints $c$ and $d$
is defined as follows. There is a zig-zag path from $c$ to $d$ iff:\\
{\bf(a)} Both $c$ and $d$ are in the same process and $c \ra d$.\\
{\bf(b)} There is a sequence of messages $m_1,\ldots,m_t$ such that: \\
\h  (i) $m_1$ is sent after the checkpoint $c$.\\
\h  (ii) If $m_k$ is received by process $r$, then $m_{k+1}$ is sent by
       process $r$ in the same or a later checkpoint interval. Note that the message $m_{k+1}$
       may be sent before $m_k$.\\
\h  (iii) $m_t$ is received before the checkpoint $d$.}


\bibliographystyle{abbrv}
\bibliography{refs,nlattice}

\begin{thebibliography}{10}

\bibitem{bhargava}
B.~Bhargava and S.-R. Lian.
\newblock Independent checkpointing and concurrent rollback for recovery in
  distributed systems-an optimistic approach.
\newblock In {\em Reliable Distributed Systems, 1988. Proceedings., Seventh
  Symposium on}, pages 3--12. IEEE, 1988.

\bibitem{Birk4}
G.~Birkhoff.
\newblock On the combination of subalgebras.
\newblock {\em Proc. Camb. Phil. Soc.}, 29:441--464, 1933.

\bibitem{HoareCSP}
S.~D. Brookes, C.~A. Hoare, and A.~W. Roscoe.
\newblock A theory of communicating sequential processes.
\newblock {\em Journal of the ACM (JACM)}, 31(3):560--599, 1984.

\bibitem{ChanLamp:Snap}
K.~M. Chandy and L.~Lamport.
\newblock Distributed snapshots: Determining global states of distributed
  systems.
\newblock {\em ACM Trans. Comput. Syst.}, 3(1):63--75, Feb. 1985.

\bibitem{davey}
B.~A. Davey and H.~A. Priestley.
\newblock {\em Introduction to Lattices and Order}.
\newblock Cambridge University Press, Cambridge, UK, 2002.

\bibitem{Dilworth50}
R.~P. Dilworth.
\newblock A decomposition theorem for partially ordered sets.
\newblock {\em Ann. Math. 51}, pages 161--166, 1950.

\bibitem{flanagan2005dynamic}
C.~Flanagan and P.~Godefroid.
\newblock Dynamic partial-order reduction for model checking software.
\newblock In {\em ACM Sigplan Notices}, volume~40, pages 110--121. ACM, 2005.

\bibitem{Ganter10}
B.~Ganter.
\newblock Two basic algorithms in concept analysis.
\newblock In {\em Formal Concept Analysis}, pages 312--340. Springer Berlin
  Heidelberg, 2010.

\bibitem{Gar03}
V.~K. Garg.
\newblock Enumerating global states of a distributed computation.
\newblock In {\em Intl Conf. on Parallel and Distributed Computing and
  Systems}, pages 134--139, November 2003.

\bibitem{GargWald:WeakUnstable}
V.~K. Garg and B.~Waldecker.
\newblock Detection of weak unstable predicates in distributed programs.
\newblock {\em IEEE Trans. on Parallel and Distributed Systems}, 5(3):299--307,
  Mar. 1994.

\bibitem{God:1996:SV}
P.~Godefroid.
\newblock {\em Partial-Order Methods for the Verification of Concurrent
  Systems}, volume 1032 of {\em Lecture Notes in Computer Science}.
\newblock Springer-Verlag, 1996.

\bibitem{helary97preventing}
J.~Helary, A.~Mostefaoui, R.~H.~B. Netzer, and M.~Raynal.
\newblock Preventing useless checkpoints in distributed computations.
\newblock In {\em Symp. on Reliable Distributed Systems}, pages 183--190,
  Durham, NC, 1997.

\bibitem{Koh}
K.~Koh.
\newblock On the lattice of maximum-sized antichains of a finite poset.
\newblock {\em Algebra Universalis}, 17(1):73--86, 1983.

\bibitem{Lamp:HappenBefore}
L.~Lamport.
\newblock Time, clocks, and the ordering of events in a distributed system.
\newblock {\em Commun. of the ACM}, 21(7):558--565, July 1978.

\bibitem{xu:zig}
R.~H.~B. Netzer and J.~Xu.
\newblock Necessary and sufficent conditions for consistent global snapshots.
\newblock {\em IEEE Trans. on Parallel and Distributed Systems}, 6(2):165--169,
  Feb. 1995.

\bibitem{tzoref2007}
R.~Tzoref, S.~Ur, and E.~Yom-Tov.
\newblock Instrumenting where it hurts: an automatic concurrent debugging
  technique.
\newblock In {\em Proceedings of the 2007 international symposium on Software
  testing and analysis}, pages 27--38. ACM, 2007.

\bibitem{wang:rdt}
Y.~M. Wang.
\newblock Consistent global checkpoints that contain a given set of local
  checkpoints.
\newblock {\em IEEE Trans. Comput.}, 46(4), Apr. 1997.

\end{thebibliography}
%
\appendix
\section{Proof of Lemma 1}
\label{sec:app_proof}

\begin{flushleft}
{\bf Lemma 1}. {\em 
Let $\hat{E} = (E, \ra, \pi)$ and $\hat{E} = (S, \mlt, \tau)$ be event and state based models of a computation.
Then there is a bijection between consistent cuts of $\hat{E}$ and $\hat{S}$.} 
\end{flushleft}
\begin{proof}
Let $G$ be any consistent cut of $(E, \ra, \pi)$. We will show how to construct the corresponding
consistent cut $T$ of \sposet.
Suppose that $G$ contains at least one event from $C_i$.
Then, let $(i,k)$ be the largest event from process $C_i$. In this case, we add $[i,k]$ to $T$.
If $G$ does not contain any event from $C_i$, then we add $[i,0]$ to $T$.
Clearly, $T$ has exactly $n$ states, one from each process.  We show that the cut $T$ is also consistent.
If not, suppose $[i,s]$ and $[j,t]$ be two states in $T$ such that $[i,s] < [j,t]$. This implies that
$(i,s+1) \ra (j,t)$, under the event based model, contradicting that $G$ is consistent because $G$ contains $(j,t)$ but does not contain $(i,s+1)$.
It is also easy to verify that the mapping from the set of consistent cuts is one-to-one.

Conversely, given a consistent cut $T$ 
in the state based model, we construct a consistent 
cut in event based model in $1-1$ manner as follows. For all states $[i,k] \in T$ we include
all events $(i,k')$ such that $k ' \leq k$. Note that when $k$ equals $0$, no events from $C_i$ are included.
It can again be easily verified that whenever $T$ is a consistent 
cut in state model, $G$ is a consistent cut in event model.
\end{proof}
\section{Illustrations}
\label{sec:app_ills}
Fig.~\ref{fig:etemp} shows the $E'_{temp}$ (and 
not the final $E'$) generated
during the execution when $SE$ transform is applied 
to $\hat{S}$ given by Fig.~\ref{fig:barrier-koh-state}. After the SCC decomposition based `collapsing' on this $E'_{temp}$, the generated $E'$ is same 
as \figref{fig:barrier-koh-event}.  
Recall that we claimed 
invalidity of a state based model poset shown in
Fig.~\ref{fig:invalidB} 
claiming that such a state model would 
cause cycles when converted to an event based model. 
Let us assign state labels to the 
states shown in that figure: 
$a=[1,0], b=[1,1], c=[1,2], d=[2,0], 
e=[2,1]$.
Now apply the
$SE$ transform of Alg.~\ref{alg:se}
to this poset on states. The resulting $(E,\ra)$ 
would be the one shown in ~\figref{fig:eventcycle}. 
Such a cycle can not
exist in a valid event based model.
\tikzstyle{blackCirc}=[circle,draw=black,fill=black,thick,inner sep=0pt,minimum size=2mm]
\begin{figure}
\centering
\begin{subfigure}[b]{0.45\textwidth}
\centering
\begin{tikzpicture}[scale=0.65]
\node at ( 0,2) [blackCirc] [label={above:$(1,1)$}] (a) {};
\node at (2.5,2) [blackCirc] [label={above:$(1,2)$}] (b) {};
\node at (2.5,0) [blackCirc] [label={below:$(2,2)$}] (b1) {};
\node at (5.5,2) [blackCirc] [label={above:$(1,3)$}] (c) {};
\node at (0,0) [blackCirc] [label={below:$(2,1)$}] (e) {};
\node at (6.2,0) [blackCirc] [label={below:$(2,3)$}] (g) {};
\node at (7.2, 2) [] (end1) {};
\node at (7.2, 0) [] (end2) {};
\draw [thick,->] (a.east) -- (b.west);
\draw [thick,->] (b.east) -- (c.west);
\draw [thick,->] (b.east) -- (g.west);
\draw [thick,->] (b1.east) -- (g.west);
\draw [thick,->] (e.east) -- (b1.west);
 \draw [thick,->]  (b) edge[out=260,in=110,->] (b1);
 \draw  [thick,->] (b1) edge[out=60,in=290,->] (b);
\end{tikzpicture}
\caption{$E'_{temp}$ for $\hat{S}$ of ~\figref{fig:barrier-run-state}}
\label{fig:etemp}
\end{subfigure}
\begin{subfigure}[b]{0.45\textwidth}
\centering
\begin{tikzpicture}[scale=0.65]
\node at (0,2) [blackCirc] [label={above:$(1,1)$}] (a) {};
\node at (3,2) [blackCirc] [label={above:$(1,2)$}] (b) {};
\node at (1.5,0) [blackCirc] [label={below:$(2,1)$}] (e) {};
\draw [thick,->] (a.east) -- (b.west);
\draw [thick,->] (b) -- (e);
\draw [thick,->] (e) -- (a);
\end{tikzpicture}
\caption{Event model generated from states of Fig.~\ref{fig:invalidB}}
\label{fig:eventcycle}
\end{subfigure}
\caption{$SE$ transform applied to earlier examples}
\end{figure}
\tikzstyle{place}=[circle,draw=black,fill=white,thick,inner sep=0pt,minimum size=2mm]
\begin{figure}[!tb]
\centering
\begin{tikzpicture}
\node at (0,0) [place] [label={left:$[i,s]$}] (is) {};
\node at (2,0) [] (is1) {};
\node at (0,1) [place] [label={left:$[j,t-1]$}] (jtmin1) {};
\node at (2,1) [place] [label={right:$[j,t]$}] (jt) {};
\node at (2,2) [place] [label={right:$[k,u]$}] (ku) {};
\node at (0,2) [] (kumin1) {};

\draw [thick,->] (is.east) -- (is1.west);
\draw [thick,->] (is.east) -- (jt.south);
\draw [thick,->] (jtmin1.east) -- (jt.west);
\draw [thick,->] (jtmin1.east) -- (ku.west);
\draw [thick,->] (kumin1.east) -- (ku.west);
\draw [thick,->] (ku.south) -- (is.north);
\end{tikzpicture}
\caption{Case $1$ when $\omega_3$ is violated (in proof of Theorem 3)}
\label{fig:case12}
\end{figure}
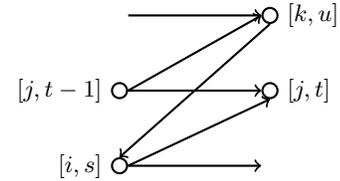
\tikzstyle{place}=[circle,draw=black,fill=white,thick,inner sep=0pt,minimum size=2mm]
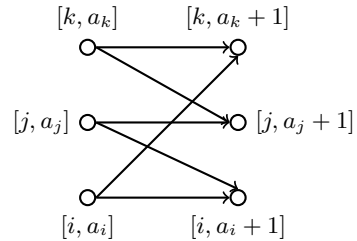
\begin{figure}
\centering
\begin{tikzpicture}
\node at (0,0) [place] [label={below:$[i,a_i]$}] (is) {};
\node at (2,0) [place] (is1) [label={below:$[i,a_i+1]$}] {};
\node at (0,1) [place] [label={left:$[j,a_j]$}] (jtmin1) {};
\node at (2,1) [place] [label={right:$[j,a_j+1]$}] (jt) {};
\node at (2,2) [place] [label={above:$[k,a_k+1]$}] (ku) {};
\node at (0,2) [place] [label={above:$[k,a_k]$}] (kak) {};
\draw [thick,->] (is.east) -- (is1.west);
\draw [thick,->] (is.east) -- (ku.south);
\draw [thick,->] (jtmin1.east) -- (is1.north);
\draw [thick,->] (jtmin1.east) -- (jt.west);
\draw [thick,->] (kak.east) -- (ku.west);
\draw [thick,->] (kak.east) -- (jt.west);
\end{tikzpicture} 
\caption{Illustration: Case 1 in proof of Lemma 5}
\label{fig:ijkal4}
\end{figure}

\balancecolumns
\end{document}